\documentclass[10pt]{article}
\usepackage{my-common}

\title{Spatio-Temporal Change of Support Modeling with R}
\author{Andrew~M. Raim$^{a}$, Scott~H. Holan$^{b,c}$, Jonathan~R. Bradley$^d$, \& Christopher~K. Wikle$^b$}

\begin{document}
\maketitle

\begin{abstract}
Spatio-temporal change of support methods are designed for statistical analysis on spatial and temporal domains which can differ from those of the observed data. Previous work introduced a parsimonious class of Bayesian hierarchical spatio-temporal models, which we refer to as STCOS, for the case of Gaussian outcomes. Application of STCOS methodology from this literature requires a level of proficiency with spatio-temporal methods and statistical computing which may be a hurdle for potential users. The present work seeks to bridge this gap by guiding readers through STCOS computations. We focus on the \proglang{R} computing environment because of its popularity, free availability, and high quality contributed packages. The \pkg{stcos} package is introduced to facilitate computations for the STCOS model. A motivating application is the American Community Survey (ACS), an ongoing survey administered by the U.S. Census Bureau that measures key socioeconomic and demographic variables for various populations in the United States. The STCOS methodology offers a principled approach to compute model-based estimates and associated measures of uncertainty for ACS variables on customized geographies and/or time periods. We present a detailed case study with ACS data as a guide for change of support analysis in \proglang{R}, and as a foundation which can be customized to other applications.
\end{abstract}

\keywords{American Community Survey, Areal Data, Basis Functions, Bayesian Statistics, Model-Based Estimates, Official Statistics}

\blfootnote{
\begin{flushleft}
$^a$Center for Statistical Research and Methodology, U.S. Census Bureau \\
$^b$Department of Statistics, University of Missouri \\
$^c$Office of the Associate Director for Research and Methodology \\
$^d$Department of Statistics, Florida State University \\
$^*$Emails: \url{andrew.raim@census.gov}, \url{holans@missouri.edu}, \url{bradley@stat.fsu.edu}, \url{wiklec@missouri.edu}
\end{flushleft}
}

\section{Introduction}
\label{sec:intro}
In the course of an analysis where data are inherently spatio-temporal, an investigator may desire estimates on spatial and/or temporal domains not coinciding exactly with domains of the observations. This can include customized geographies and time periods conceived long after the data have been collected. Spatio-temporal change of support methods aim to provide this capability. A methodology recently proposed by \citet{BradleyWikleHolan2015Stat} captures spatio-temporal dependencies in areal data by constructing several key matrices which become the foundation of a Bayesian hierarchical model. Model fitting is done via Markov chain Monte Carlo (MCMC); in particular, the model permits a Gibbs sampler which is conveniently composed of draws from standard distributions. Estimates, predictions, and appropriate measures of uncertainty are provided by the fitted model. This methodology, hereafter referred to as the \textit{STCOS model} or \textit{STCOS methodology}, is the focus of the present paper. Although STCOS methodology has been fully specified by \citet{BradleyWikleHolan2015Stat}, potential users---such as subject-domain scientists who may not be experts in spatio-temporal statistics---may find proceeding from the previous literature to their own applications to be a difficult hurdle. A successful implementation requires managing datasets containing estimates, geospatial data, operations on sparse matrices, Bayesian computing, plotting, as well as carrying out computations tailored to the STCOS model.

In this paper, we demonstrate an assortment of tools to perform STCOS modeling through a detailed case study, with the objective of making the methodology more accessible to potential users. The required tasks can be accomplished with a variety of modern computing platforms, but we will focus on \proglang{R}, the popular open source environment for statistical computing \citep{Rcore2020}. \proglang{R} is supported by an active community of academic, corporate, and individual users. A large and diverse collection of packages has been contributed by its community and published to repositories such as the Comprehensive \proglang{R} Archive Network (CRAN). Much of \proglang{R}, including the base platform and CRAN packages, is freely available on the internet. The high-level \proglang{R} programming language facilitates data analysis, fast prototyping of new methods, and simulation, and can be augmented with \proglang{C}, \proglang{C++}, and \proglang{FORTRAN} when speed or efficient use of memory are crucial. In addition to highlighting some established \code{R} packages, we introduce the \pkg{stcos} package to handle some of the more intricate STCOS computations in an efficient and user-friendly way. Familiarity with \proglang{R} will be assumed throughout the remainder of the paper.

\citet{BradleyWikleHolan2015Stat} developed STCOS with a motivating application to the American Community Survey (ACS), an ongoing survey administered by the U.S.~Census Bureau for the purpose of measuring key socioeconomic and demographic variables for the U.S.~population. ACS is again showcased in the present paper as it remains an important application for change of support methods~\citep{NCRNPub2019}. However, STCOS methodology is not limited to applications involving the ACS or the U.S. Census Bureau. The problem of spatial change of support has arisen in atmospheric science and oceanography \citep{WikleBerliner2005}, water quality modeling \citep{RodeEtAl2010}, environmental health \citep{FuentesEtAl2006}, and remote sensing \citep{NguyenEtAl2012}, among others. See \citet{GotwayYoung2002}, \citet{BradleyWikleHolan2015Stat}, and the references therein for a review of the change of support literature.

Existing software tools for change of support appear to originate from geographic information systems (GIS) literature, emphasizing methods such as pycnophylactic interpolation \citep{Tobler1979}, areal-weighted interpolation \citep{Lam1983}, and dasymetric mapping \citep[e.g.][]{EicherBrewer2001}. In \proglang{R}, such tools include the \pkg{pycno} package \citep{Brunsdon2014}, the \code{st_interpolate_aw} function in the \pkg{sf} package \citep{Pebesma2018}, and the \pkg{areal} package \citep{PrenerRevord2019}. The \pkg{Tobler} package was developed for \proglang{Python} by \citet{CortesEtAl2019}. \citet{QiuEtAl2012} and \citet{MileuQueiros2018} describe adding change of support capabilities to the \proglang{ArcGIS} and \proglang{QGIS} platforms, respectively. From the perspective of software tools, the present work offers two major contributions: (1) measures of uncertainty expressed via a statistical model, and (2) the capability to carry out change of support in both space and time.

The remainder of the article proceeds as follows. Section~\ref{sec:acs-stcos} discusses STCOS concepts in the context of the ACS. Section~\ref{sec:model} reviews STCOS methodology. Here some additional details are provided---and some small modifications are made---from the original formulation of \citet{BradleyWikleHolan2015Stat}. Section~\ref{sec:Rtools} discusses the set of \proglang{R} tools to be demonstrated, including basic functionality of the \pkg{stcos} package. Section~\ref{sec:columbia} presents our case study to demonstrate STCOS programming; we produce model-based estimates of median household income for several neighborhoods in the City of Columbia in Boone County, Missouri. Section~\ref{sec:conclusions} concludes the article. This article is intended to be largely self-contained for a wide range of readers; those eager to begin programming can focus primarily on Sections~\ref{sec:Rtools} and \ref{sec:columbia}. The \pkg{stcos} package is available on the CRAN at \url{https://CRAN.R-project.org/package=stcos}. The complete code for the City of Columbia data analysis is provided as a supplement to this article.

\section{Change of support concepts and the ACS}
\label{sec:acs-stcos}

To facilitate our discussion of the change of support problem and STCOS methodology, we now give a brief overview of the ACS. Public-use ACS data are available through the Census Bureau's ACS website (\url{https://www.census.gov/programs-surveys/acs}) dating back to the year 2005. Estimates have historically been released for 1-year, 3-year, or 5-year periods; 3-year period estimates were discontinued after 2013. The Census Bureau releases annual ACS period estimates for a variety of geographies including states, counties, census tracts, and school districts. At their finest geography, data are released at the census block-group level; however, estimates for an area are suppressed unless the area meets certain criteria. An area typically must have a population of at least 65,000 for 1-year estimates to be released, but there is no population requirement for 5-year estimates \citep{ACSSuppression}. ACS estimates consist of point estimates and associated measures of uncertainty such as margins of error (MOEs) corresponding to 90\% confidence intervals, or variance estimates; we will refer to them collectively as direct estimates. Because statistical agencies like the Census Bureau have direct access to the confidential microdata, special tabulations for new geographies or period lengths can be prepared internally as needed. However, data users outside of the Census Bureau may be interested in custom geographies and/or nonstandard time periods which are not provided by the agency. Providing ACS data users tools for change of support has recently been identified as an important problem by a National Academy of Sciences panel~\citep{NAS2015}. STCOS methodology enables model-based estimates to be computed with public-use ACS releases.

The change of support problem can be illustrated by a concrete example, taking median household income as the variable of interest here and for the remainder of the article. Suppose we would like to produce 3-year model-based estimates in Missouri congressional districts for the year 2015. Congressional districts are geographic regions which receive representation by an elected official in the U.S.~House of Representatives and are determined by a redistricting process which is based on data from each decennial census. The Census Bureau does release ACS estimates on congressional districts, but releases of 3-year estimates for all geographies were discontinued after 2013; therefore, model-based estimates may be of interest to data users. Geographies on which we want to produce estimates and predictions are referred to as \emph{target supports}. Figure~\ref{fig:county-vs-cd} displays the eight designated congressional districts in Missouri for the year 2015. Geographies on which direct estimates are available are used to fit the STCOS model and are referred to as \emph{source supports}. For this illustration, we could take the source supports to be all 1-year, 3-year, and 5-year ACS releases for the counties within Missouri. Including available periods over a number of years allows the STCOS model to find trends in both time and space, and make use of estimates which represent varying levels of granularity and sparseness. Figure~\ref{fig:acs-maps} shows direct estimates for Missouri in the year 2013. We notice that 1-year and 3-year period estimates have been suppressed for many counties. We emphasize that counties and congressional districts do not necessarily align, and the crux of the STCOS problem is to ``translate'' between the county-level observations and the congressional districts. The third type of support which must be discussed is the \emph{fine-level support}. For this example, we could take the fine-level support to be the 2015 definition of counties in Missouri, shown in Figure~\ref{fig:county-vs-cd}. The STCOS methodology works by translating each of the source supports to the fine-level support during the model fitting process. Once the model has been fit, estimates and predictions on target supports of interest are obtained by translating from the fine-level support. \citet{JSM2017-STCOS} presents a model selection study with counties in the continental U.S. as target and source supports, congressional districts as target supports, and median household income as the ACS variable of interest. Section~\ref{sec:columbia} will demonstrate a smaller-scale problem which requires less computing time.

\section{The STCOS model}
\label{sec:model}
Let $\mathcal{T} = \{ v_1, \ldots, v_T \}$ represent the set of times for which direct estimates are available, indexed by $t = 1, \ldots T$. Let $\mathcal{L}$ denote the set of possible lookback periods for which these estimates have been constructed. We will take $\mathcal{T}$ to consist of the years 2005 through 2017, corresponding to the ACS releases available during the preparation of this article, and $\mathcal{L} = \{1, 3, 5\}$ to denote 1-year, 3-year, and 5-year period releases. Therefore, $\ell$-year direct estimates for year $v_t$ are based on the time period $(v_{t-\ell+1}, \ldots, v_t)$. Data may not be released for all $(v_t,\ell) \in \mathcal{T} \times \mathcal{L}$; for example, ACS 3-year estimates were discontinued after 2013. Let $(\mathcal{T} \times \mathcal{L})^*$ denote the subset of $\mathcal{T} \times \mathcal{L}$ that corresponds to a data release. For each $(v_t,\ell) \in (\mathcal{T} \times \mathcal{L})^*$, the associated source support $D_{t\ell}$ is a collection of areal units whose estimates are included in the release. For each areal unit $A \in D_{t\ell}$, $Z_t^{(\ell)}(A)$ is the direct point estimate for one ACS variable of interest and $V_t^{(\ell)}(A)$ is the corresponding variance estimate. The fine level support will be denoted $D_B = \{ B_1, \ldots, B_{n_B} \}$. The total surface area of a given areal unit $A$ will be denoted $|A|$.

The STCOS model is a Bayesian hierarchical model \citep[][Section~2.1]{CressieWikle2011} which will first state before describing components in detail. Let $\text{N}(\vec{\mu}, \vec{\Sigma})$ denote the multivariate normal distribution with density 
\(
\phi(\vec{x} \mid \vec{\mu}, \vec{\Sigma})
= (2\pi)^{-k/2} |\vec{\Sigma}|^{-1/2}
\exp\{-\frac{1}{2} (\vec{x} - \vec{\mu})^\top \vec{\Sigma}^{-1} (\vec{x} - \vec{\mu})\}
\)
for $\vec{x} \in \mathbb{R}^k$, where the dimension $k$ depends on the context. Let $\text{IG}(a, b)$ denote the Inverse Gamma distribution with density $f_{\text{IG}}(x \mid a, b) = b^a x^{-a-1} e^{-b / x} / \Gamma(a) \cdot I(x > 0)$, where $I(\cdot)$ is the indicator function. First, the \textit{data model} is
\begin{align*}
Z_t^{(\ell)}(A) = Y_t^{(\ell)}(A) + \varepsilon_t^{(\ell)}(A), \quad
\varepsilon_t^{(\ell)}(A) \indep \text{N}(0, V_t^{(\ell)}(A)),
\end{align*}
for $A \in D_{t\ell}$ and $(v_t,\ell) \in (\mathcal{T} \times \mathcal{L})^*$. Second, the \textit{process model} is
\begin{align*}
Y_t^{(\ell)}(A) &=
\vec{h}(A)^\top \vec{\mu}_B + 
\vec{s}_t^{(\ell)}(A)^\top \vec{\eta} + 
\xi_t^{(\ell)}(A), \\
[\vec{\eta} \mid \sigma_K^2]  &\sim \text{N}(\vec{0}, \sigma_K^2 \vec{K}), \\
[\xi_t^{(\ell)}(A)\mid \sigma_{\xi}^2] &\iid \text{N}(0, \sigma_{\xi}^2),
\end{align*}
for $A \in D_{t\ell}$ and $(v_t,\ell) \in (\mathcal{T} \times \mathcal{L})^*$. Finally, the \textit{parameter model} is
\begin{align*}
\vec{\mu}_B \sim \text{N}(\vec{0}, \sigma_\mu^2 \vec{I}), \quad
\sigma_\mu^2 \sim \text{IG}(a_\mu, b_\mu), \quad
\sigma_K^2 \sim \text{IG}(a_K, b_K), \quad
\sigma_\xi^2 \sim \text{IG}(a_\xi, b_\xi).
\end{align*}
The STCOS model assumes that direct estimates $Z_t^{(\ell)}(A)$ constitute a noisy observation of an underlying latent process $Y_t^{(\ell)}(A)$. The variance of the noise $\varepsilon_t^{(\ell)}(A)$ is assumed to be the direct variance estimate $V_t^{(\ell)}(A)$. The mean of the latent process $Y_t^{(\ell)}(A)$ consists of a coarse spatial trend $\vec{h}(A)^\top \vec{\mu}_B$ and a spatio-temporal random process $\vec{s}_t^{(\ell)}(A)^\top \vec{\eta}$. Conjugate priors are assumed for the coefficients and variance parameters from the previous two stages. The matrix $\vec{K}$, which provides the covariance structure for the random coefficient of $\vec{\eta}$, is assumed to be known and is computable from the fine-level support.

The latent process model is motivated by the following construction. Define a continuous-space discrete-time process,
\begin{align*}
Y(\vec{u},v) = \delta(\vec{u}) + \sum_{j=1}^\infty \psi_j(\vec{u},v) \cdot \eta_j,
\quad \text{for $\vec{u} \in \bigcup_{i=1}^{n_B} B_i$ and $v \in \mathcal{T}$},
\end{align*}
where $\delta(\vec{u})$ is a large-scale spatial trend process and $\{ \psi_j(\vec{u},v) \}_{j=1}^\infty$ is a prespecified set of spatio-temporal basis functions. Integrating $Y(\vec{u},v)$ uniformly over $\vec{u} \in A$ and an $\ell$-year period $\vec{v} = (v_{t-\ell+1}, \ldots, v_t)$,
\begin{align}
Y_t^{(\ell)}(A) &=
\frac{1}{|A|} \int_A \delta(\vec{u}) \, d\vec{u} +
\frac{1}{\ell |A|} \sum_{k=t-\ell+1}^t \sum_{j=1}^r \int_A \psi_j(\vec{u},v_k) \cdot \eta_j \, d\vec{u} \nonumber \\
&\quad+ \frac{1}{\ell |A|} \sum_{k=t-\ell+1}^t \sum_{j=r+1}^\infty \int_A \psi_j(\vec{u},v_k) \cdot \eta_j \, d\vec{u} \nonumber \\
&= \mu(A) + \vec{s}_t^{(\ell)}(A)^\top \vec{\eta} + \xi_t^{(\ell)}(A).
\label{eqn:process}
\end{align}
In \eqref{eqn:process}, we have used the notation
\begin{align}
&\mu(A) = \frac{1}{|A|} \int_A \delta(\vec{u}) \, d\vec{u},
\label{eqn:large-scale} \\
&\vec{s}_t^{(\ell)}(A)^\top \vec{\eta} = \frac{1}{\ell |A|} \sum_{k=t-\ell+1}^t \sum_{j=1}^r \int_A \psi_j(\vec{u},v_k) \cdot \eta_j \, d\vec{u},
\label{eqn:spt-process} \\
&\xi_t^{(\ell)}(A) = \frac{1}{\ell |A|} \sum_{k=t-\ell+1}^t \sum_{j=r+1}^\infty \int_A \psi_j(\vec{u},v_k) \cdot \eta_j \, d\vec{u},
\label{eqn:remainder}
\end{align}
so that \eqref{eqn:large-scale} represents a large-scale spatial trend, \eqref{eqn:spt-process} is a spatio-temporal random process, and \eqref{eqn:remainder} is the remainder. We assume that $[\xi_t^{(\ell)}(A) \mid \sigma_\xi^2] \iid \text{N}(0, \sigma_\xi^2)$, and make use of local bisquare basis functions for the small-scale spatio-temporal trend, which are of the form
\begin{align}
\psi_j(\vec{u},v) =
\left[ 2 - \frac{\lVert \vec{u} - \vec{c}_j \rVert^2}{w_s^2}- \frac{|v - g_j|^2}{w_t^2} \right]^2  \cdot
I(\lVert \vec{u} - \vec{c}_j \rVert \leq w_s) \cdot
I(|v - g_j| \leq w_t),
\label{eqn:spacetime-bisquare-basis}
\end{align}
for $j = 1, \ldots, r$, with specified knots $\{ (\vec{c}_j, g_j) : j = 1, \ldots, r \}$. Knots may be taken as the Cartesian product of a set of spatial knot points $\{ \vec{c}_a : a = 1, \ldots, r_\text{space} \}$ and a set of temporal knot points $\{ g_b : b = 1, \ldots, r_\text{time} \}$; however, this is not required in general. The basis functions also require specification of a spatial radius $w_s$ and temporal radius $w_t$. We will select evenly spaced temporal knot points and spatial knot points according to a space-filling design \citep{NychkaSaltzman1998}. It can be difficult to specify $w_s$ directly, as the influence of $w_s$ depends on the coordinate system used in the supports. We therefore take $w_s = \tilde{w}_s \cdot Q_{0.05}$, where $Q_{0.05}$ is the 0.05 quantile of all nonzero pairwise distances between spatial cutpoints and $\tilde{w}_s$ is a parameter to be selected by the user. See \citet{JSM2017-STCOS} for a model selection study varying several factors in this model such as the number of knot points and the selection of $\tilde{w}_s$ and $w_t$.

Basis functions at the area level may be obtained from bases \eqref{eqn:spacetime-bisquare-basis} defined at the point level; for area $A$ and an $\ell$-year period specified by years $\vec{v} = (v_{t-\ell+1}, \ldots, v_t)$, let
\begin{align}
\psi_j^{(\ell)}(A, \vec{v})
= \frac{1}{\ell} \sum_{k=t-\ell+1}^t \frac{1}{|A|} \int_A \psi_j(\vec{u},v_k) d\vec{u},
\label{eqn:basis-spt}
\end{align}
which can be computed by a Monte Carlo approximation via
\begin{align*}
\psi_j^{(\ell)}(A, \vec{v})
\approx \frac{1}{\ell Q} \sum_{k=t-\ell+1}^t \sum_{q=1}^Q \psi_j(\vec{u}_q,v_k),
\end{align*}
based on a random sample of locations $\vec{u}_1, \ldots, \vec{u}_Q$ from a uniform distribution on the region $A$. Therefore, the basis expansion for an $\ell$-year lookback period $\vec{v} = (v_{t-\ell+1}, \ldots, v_t)$ and area $A$ is
\begin{align*}
\vec{s}_t^{(\ell)}(A)^\top = \left(
\psi_1^{(\ell)}(A, \vec{v}), \ldots, \psi_r^{(\ell)}(A, \vec{v})
\right).
\end{align*}
Next, for the large-scale spatial trend process, we make the simplifying assumption that
\begin{align*}
\delta(\vec{u}) = \sum_{i=1}^{n_B} \mu_i I(\vec{u} \in A \cap B_i),
\end{align*}
for an area $A$. Then $\delta(\vec{u})$ takes on a constant value on each overlap $A \cap B_i$ for $B_i \in D_B$. Define
\begin{align*}
\vec{h}(A) = \left( \frac{|A \cap B_1|}{|A|}, \ldots, \frac{|A \cap B_{n_B}|}{|A|} \right)^\top
\end{align*}
as the vector of proportions in which $A$ overlaps with each area $B_i$ in the fine-level support; this is based on the geography, and is therefore a known quantity in the analysis. Manipulation of geographical data in \proglang{R} will be discussed in Sections~\ref{sec:Rtools} and \ref{sec:columbia}. Integrating over $\vec{u} \in A$ yields
\begin{align*}
\mu(A) &= \frac{1}{|A|} \sum_{i=1}^{n_B} \int_{A \cap B_i} \delta(\vec{u}) d\vec{u}
= \frac{1}{|A|} \sum_{i=1}^{n_B} \mu_i \int_{A \cap B_i} du
= \sum_{i=1}^{n_B} \mu_i \frac{|A \cap B_i|}{|A|}
= \vec{h}(A)^\top \vec{\mu}_B.
\end{align*}
The coefficient
$
\vec{\mu}_B = (\mu_1, \ldots, \mu_{n_B})^\top
$
represents the change of support coefficient between the fine-level support and all other supports, and is the primary quantity of interest in the model.

To simplify the remaining presentation, we now write the model in vector form. Suppose there are $N$ total observations, indexed $i = 1, \ldots, N$, in all of the source supports combined. Let $\mathcal{H}$ be the mapping from each index $i$ to a triple $(A, t, \ell)$ consisting of the area $A$, time $v_t$, and lookback $\ell$ for the $i$th observation. Let $\text{vec}(\mathcal{S})$ denote a vector constructed from the elements of an ordered collection $\mathcal{S}$, $\Diag(\mathcal{S})$ represent a diagonal matrix with the elements of $\mathcal{S}$, and $\text{rbind}(\mathcal{S})$ represent a matrix with the elements of $\mathcal{S}$ as rows. We may then write
\begin{align}
\vec{Z} &= \text{vec}\left(
Z_t^{(\ell)}(A):
(A, t, \ell) = \mathcal{H}(i), \; i = 1, \ldots, N
\right), \nonumber \\
\quad \vec{H} &= \text{rbind}\left(
\vec{h}_t^{(\ell)}(A)^\top:
(A, t, \ell) = \mathcal{H}(i), \; i = 1, \ldots, N
\right), \nonumber \\
\vec{S} &= \text{rbind}\left(
\vec{s}_t^{(\ell)}(A)^\top:
(A, t, \ell) = \mathcal{H}(i), \; i = 1, \ldots, N
\right), \nonumber \\
\quad \vec{\xi} &= \text{vec}\left(
\xi_t^{(\ell)}(A):
(A, t, \ell) = \mathcal{H}(i), \; i = 1, \ldots, N
\right), \nonumber \\
\vec{\varepsilon} &= \text{vec}\left(
\varepsilon_t^{(\ell)}(A):
(A, t, \ell) = \mathcal{H}(i), \; i = 1, \ldots, N
\right), \nonumber \\
\quad \vec{V} &= \Diag\left(
V_t^{(\ell)}(A):
(A, t, \ell) = \mathcal{H}(i), \; i = 1, \ldots, N
\right),
\label{eqn:vector-form}
\end{align}
where $\vec{h}_t^{(\ell)}(A) = \vec{h}(A)$ does not vary with $t$ or $\ell$. The STCOS model can now be written
\begin{align}
&\vec{Z} = \vec{H} \vec{\mu}_B + \vec{S} \vec{\eta} + \vec{\xi} + \vec{\varepsilon}, \nonumber \\
&\vec{\varepsilon} \sim \text{N}(0, \vec{V}), \quad
[\vec{\eta} \mid \sigma_K^2] \sim \text{N}(\vec{0}, \sigma_K^2 \vec{K}), \quad
[\vec{\xi} \mid \sigma_\xi^2] \sim \text{N}(0, \sigma_{\xi}^2 \vec{I}), \nonumber \\
&[\vec{\mu}_B \mid \sigma_\mu^2] \sim \text{N}(\vec{0}, \sigma_\mu^2 \vec{I}), \quad
\sigma_\mu^2 \sim \text{IG}(a_\mu, b_\mu), \quad
\sigma_K^2 \sim \text{IG}(a_K, b_K), \quad
\sigma_\xi^2 \sim \text{IG}(a_\xi, b_\xi).
\label{eqn:stcos-vector}
\end{align}
We will also define $\vec{Y} = \vec{H} \vec{\mu}_B + \vec{S} \vec{\eta} + \vec{\xi}$ as the latent process for the observations.

We now discuss specification of the matrix $\vec{K}$. Let $A \sim B$ be the predicate that area $A$ is adjacent to area $B$ with $A \sim A$ taken to be false by definition. Denote $\vec{W} = (w_{ij})$ as the $n_B \times n_B$ adjacency matrix with $w_{ij} = I(B_i \sim B_j)$ for $i,j \in \{ 1, \ldots, n_B \}$, and $\vec{D} = \Diag(w_{1+}, \ldots, w_{n+})$ with $i$th diagonal entry $w_{i+} = \sum_{\ell=1}^{n_B} w_{ij}$. The matrix $\vec{Q} = \vec{I} - \tau \vec{D}^{-1} \vec{W}$ corresponds to the precision matrix of a particular class of conditional autoregressive (CAR) process. We take $\tau \in (0,1)$ to be known, for simplicity, to ensure that $\vec{Q}$ is nonsingular provided that areas $B_1, \ldots, B_{n_B}$ form a connected graph. Other choices of $\vec{Q}$ can be considered to obtain other classes of CAR precision matrices \citep[see][and the references therein]{CressieWikle2011,BanerjeeEtAl2014}.

For the purpose of specifying a spatio-temporal variance, suppose the fine-level support is distributed according to the process
\begin{align}
\vec{Y}_t^* = \vec{\mu}_B + \vec{\zeta}_t, \quad
\vec{\zeta}_t = \vec{M} \vec{\zeta}_{t-1} + \vec{b}_t, \quad
[\vec{b}_t \mid \sigma_K^2] \iid \text{N}(\vec{0}, \sigma_K^2 \vec{Q}^{-1}),
\label{eqn:fine-level}
\end{align}
for $v_t \in \mathcal{T}$ and assume $\vec{b}_0 = \vec{0}$. That is, $\{ \vec{Y}_t^* \}$ is a vector autoregressive (VAR) process in time and a CAR process in space. Let $\vec{\Sigma}_{Y^*}$ denote the covariance matrix of $(\vec{Y}_t^*: v_t \in \mathcal{T})$ under model \eqref{eqn:fine-level}. We take $\vec{K}$ to be the minimizer of
\begin{align}
\lVert \vec{\Sigma}_{Y^*} - \vec{S}^* \vec{C} \vec{S}^{*\top} \rVert_\text{F},
\quad \text{such that $\vec{C}$ is an $r \times r$ positive semidefinite matrix,}
\label{eqn:approximant-problem}
\end{align}
under the Frobenius norm $\lVert \cdot \rVert_\text{F}$, where
$\vec{S}^* = \text{rbind}\left( \vec{s}_t^{(\ell)}(A)^\top: A \in D_B, v_t \in \mathcal{T}, \ell \in \mathcal{L} \right)$
is the basis function expansion on the fine-level geography. In \eqref{eqn:approximant-problem}, $\vec{\Sigma}_{Y^*}$ represents the desired covariance structure under model \eqref{eqn:fine-level}, while $\vec{S}^* \vec{C} \vec{S}^{*\top}$ represents the realized covariance contribution of $\vec{S} \vec{\eta}$ in the model \eqref{eqn:stcos-vector}, where
\begin{align*}
\Var(\vec{Y} \mid \vec{\mu}_B, \sigma_\mu^2, \sigma_\xi^2, \sigma_K^2) = \sigma_K^2 \vec{S} \vec{K} \vec{S}^\top + \sigma_\xi^2 \vec{I},
\end{align*}
conditionally on the random variables in the parameter model. The solution to \eqref{eqn:approximant-problem},
\begin{align*}
\vec{C}^* = (\vec{S}^{*\top} \vec{S}^*)^{-1} \vec{S}^{*\top} \vec{\Sigma}_{Y^*} \vec{S}^* (\vec{S}^{*\top} \vec{S}^*)^{-1},
\end{align*}
provides the best positive approximant to $\vec{\Sigma}_{Y^*}$; details are given in Appendix~\ref{sec:details}. For the remainder of the article, we will take $\vec{\Sigma}_{Y^*}$ to be positive definite and $\vec{S}^*$ to be full rank so that $\vec{K}$ is positive definite. \citet{BradleyWikleHolan2015Stat} and \citet{BradleyHolanWikle2015AOAS} further discuss this approach within the context spatio-temporal models, and \citet{Higham1988} discusses the positive approximant problem in the general setting. We may write $\vec{\Sigma}_{Y^*} = \sigma_K^2 \tilde{\vec{\Sigma}}_{Y^*}$ so that
\begin{align}
\vec{C}^* = \sigma_K^2 \vec{K}, \quad
\vec{K} = (\vec{S}^{*\top} \vec{S}^*)^{-1} \vec{S}^{*\top} \tilde{\vec{\Sigma}}_{Y^*} \vec{S}^* (\vec{S}^{*\top} \vec{S}^*)^{-1}.
\label{eqn:approximant-solution}
\end{align}
Notice that $\tilde{\vec{\Sigma}}_{Y^*}$ and $\vec{K}$ are free of unknown parameters so that the solution of \eqref{eqn:approximant-problem} does not need to be recomputed within MCMC iterations as parameter values are updated.

We consider several possible structures for $\vec{K}$. First, assume that $\vec{M} = \vec{I}$ so that the fine-level process defined in \eqref{eqn:fine-level} is a vector random walk with nonstationary autocovariance function
\begin{align*}
\vec{\Gamma}(s,t) = \Cov(\vec{Y}_s^*, \vec{Y}_t^*)= \min(s,t) \sigma_K^2 \vec{Q}^{-1},
\end{align*}
conditioning on $\sigma_K^2$. Letting $\tilde{\vec{\Gamma}}(s,t) = \sigma_K^{-2} \vec{\Gamma}(s,t)$, which is free of $\sigma_K^{2}$, and choosing
\begin{align*}
\tilde{\vec{\Sigma}}_{Y^*} =
\begin{bmatrix}
\tilde{\vec{\Gamma}}(1,1) & \cdots & \tilde{\vec{\Gamma}}(1,T) \\
\vdots              & \ddots & \vdots \\
\tilde{\vec{\Gamma}}(T,1) & \cdots & \tilde{\vec{\Gamma}}(T,T)
\end{bmatrix}
\end{align*}
as the covariance of $\{ \vec{Y}_t^* \}$, $\vec{K}$ is obtained from \eqref{eqn:approximant-solution} to be
\begin{align}
\vec{K} &= (\vec{S}^{*\top} \vec{S}^*)^{-1} \left[
\sum_{s=1}^T \sum_{t=1}^T \min(s,t) \vec{S}_s^{*\top} \vec{Q}^{-1} \vec{S}_t^*
\right] (\vec{S}^{*\top} \vec{S}^*)^{-1},
\label{eqn:random-walk}
\end{align}
where we define
$\vec{S}_t^* = \text{rbind}\left( \vec{s}_t^{(\ell)}(A)^\top: A \in D_B, \ell \in \mathcal{L} \right)$
for each $v_t \in \mathcal{T}$. Another useful covariance structure arises if we assume that $\vec{M} = \vec{0}$. This yields autocovariance function $\vec{\Gamma}(s,t) = I(s = t) \sigma_K^2 \vec{Q}^{-1}$ and $\vec{\Sigma}_{Y^*} = \sigma_K^2 \vec{Q}^{-1} \otimes \vec{I}_{T}$, conditioning on $\sigma_K^2$, where $\otimes$ represents the Kronecker product. This structure supports nonzero covariance among areas at common times but independence between areas across times. The approximant \eqref{eqn:approximant-solution} with $\tilde{\vec{\Sigma}}_{Y^*} = \vec{Q}^{-1} \otimes \vec{I}_{T}$ is
\begin{align}
\vec{K} &= (\vec{S}^{*\top} \vec{S}^*)^{-1} \left[
\sum_{t=1}^T  \vec{S}_t^{*\top} \vec{Q}^{-1} \vec{S}_t^*
\right] (\vec{S}^{*\top} \vec{S}^*)^{-1}.
\label{eqn:spatial-only}
\end{align}
One more useful covariance structure assumes no spatial or temporal covariance;
\begin{align}
\vec{K} = \vec{I}.
\label{eqn:independence}
\end{align}
It is worth emphasizing that $\vec{K}$ describes the covariance structure for $\vec{\eta}$, but the covariance contribution to the model occurs through $\vec{Y}$ via $\vec{S} \vec{K} \vec{S}^\top$. For example, an independence assumption for $\vec{\eta}$ yields $\vec{S} \vec{K} \vec{S}^\top = \vec{S} \vec{S}^\top$, which is not necessarily a diagonal matrix using the basis functions \eqref{eqn:basis-spt} or the dimension-reduced version discussed in Section~\ref{sec:columbia}. The covariance structures we consider in this work---namely \eqref{eqn:random-walk}, \eqref{eqn:spatial-only}, and \eqref{eqn:independence}---are a departure from \citet{BradleyWikleHolan2015Stat}, who recommend computing $\vec{M}$ itself by further basis function decomposition.

We can obtain a Gibbs sampler by considering the joint distribution of the random quantities in \eqref{eqn:stcos-vector},
\begin{align*}
&f(\vec{Z}, \vec{\eta}, \vec{\xi}, \vec{\mu}_B, \sigma_\mu^2, \sigma_K^2, \sigma_\xi^2) \\
&\quad=
\phi(\vec{Z} \mid \vec{H} \vec{\mu}_B + \vec{S} \vec{\eta} + \vec{\xi}, \vec{V}) \cdot
\phi(\vec{\xi} \mid \vec{0}, \sigma_\xi^2 \vec{I}) \cdot
\phi(\vec{\eta} \mid 0, \sigma_K^2 \vec{K}) \\
&\qquad \times \phi(\vec{\mu}_B \mid \vec{0}, \sigma_\mu^2 \vec{I}) \cdot
f_{\text{IG}}(\sigma_\mu^2 \mid a_\mu, b_\mu) \cdot
f_{\text{IG}}(\sigma_K^2 \mid a_K, b_K) \cdot
f_{\text{IG}}(\sigma_\xi^2 \mid a_\xi, b_\xi),
\end{align*}
and deriving the full conditional distribution of each unknown parameter \citep[e.g.,][Section~5.3]{BanerjeeEtAl2014}. Here, the derivation is routine and details have been omitted for brevity. The steps of the Gibbs sampler which result from the full conditionals of $\vec{\mu}_B$, $\vec{\eta}$, $\vec{\xi}$, $\sigma_\mu^2$, $\sigma_K^2$, and $\sigma_\xi^2$ are stated as Algorithm~\ref{alg:gibbs-sampler}. The notation $[\vec{X} \mid \rest]$ is used to denote the distribution of a given random variable $\vec{X}$ conditioned on all other random quantities.

\begin{algorithm}
\caption{Gibbs sampler steps for STCOS model.}
\label{alg:gibbs-sampler}
\begin{enumerate}
\item Draw $[\vec{\mu}_B \mid \rest] \sim \text{N}(\vec{\vartheta}_\mu, \vec{\Omega}_\mu^{-1})$, with
$\vec{\Omega}_\mu = \vec{H}^\top \vec{V}^{-1} \vec{H} + \sigma_\mu^{-2} \vec{I}$.
and
$\vec{\vartheta}_\mu = \vec{\Omega}_\mu^{-1} \vec{H}^\top \vec{V}^{-1} (\vec{Z} - \vec{S} \vec{\eta} - \vec{\xi})$.

\item Draw $[\vec{\eta} \mid \rest] \sim \text{N}(\vec{\vartheta}_\eta, \vec{\Omega}_\eta^{-1})$, with
$\vec{\Omega}_\eta = \vec{S}^\top \vec{V}^{-1} \vec{S} + \sigma_K^{-2} \vec{K}^{-1}$
and
$\vec{\vartheta}_\eta = \vec{\Omega}_\eta^{-1} \vec{S}^\top \vec{V}^{-1} (\vec{Z} - \vec{H} \vec{\mu}_B - \vec{\xi})$.

\item Draw $[\vec{\xi} \mid \rest] \sim \text{N}(\vec{\vartheta}_\xi, \vec{\Omega}_\xi^{-1})$, with
$\vec{\Omega}_\xi = \vec{V}^{-1} + \sigma_\xi^{-2} \vec{I}$
and
$\vec{\vartheta}_\xi = \vec{\Omega}_\xi \vec{V}^{-1} (\vec{Z} - \vec{H} \vec{\mu}_B - \vec{S} \vec{\eta})$.

\item Draw $[\sigma_\mu^2 \mid \rest] \sim \text{IG}(a_\mu^*, b_\mu^*)$, with $a_\mu^* = a_\mu + n_B / 2$ and $b_\mu^* = b_\mu + \vec{\mu}_B^\top \vec{\mu}_B / 2$.

\item Draw $[\sigma_K^2 \mid \rest] \sim \text{IG}(a_K^*, b_K^*)$, with $a_K^* = a_K + r / 2$ and $b_K^* = b_K + \vec{\eta}^\top \vec{K}^{-1} \vec{\eta} / 2$.

\item Draw $[\sigma_\xi^2 \mid \rest] \sim \text{IG}(a_\xi^*, b_\xi^*)$, with $a_\xi^* = a_\xi + N / 2$ and $b_\xi^* = b_\xi + \vec{\xi}^\top \vec{\xi} / 2$.
\end{enumerate}
\end{algorithm}

\section[Implementing STCOS in R]{Implementing STCOS in \proglang{R}}
\label{sec:Rtools}

STCOS modeling can be roughly separated into three phases: assembling published estimates and geospatial data into a usable form, preparing matrices and vectors needed to fit the model, and finally fitting the model and producing results. To read and manipulate geospatial data, we will highlight the \pkg{sf} package \citep{Pebesma2018}, which we find to be intuitive and comprehensive. For general data manipulation, such as filtering records and selecting columns from a table, we will make use of the \pkg{dplyr} package \citep{dplyr2020}. To produce high quality graphics, we use the \pkg{ggplot2} package \citep{Wickham2016}. The \pkg{dplyr} and \pkg{ggplot2} packages are especially convenient because of their compatibility with \pkg{sf} objects. The \code{tigris} package \citep{Walker2018} provides a convenient way to request geographical data from the Census Bureau Tiger/Line database within \proglang{R}. The \pkg{fields} package \citep{NychkaEtAl2017} can be used to select spatial knot points by a space-filling design. General purpose platforms for Bayesian computing, including \proglang{Stan} \citep{CarpenterEtAl2017}, \proglang{JAGS} \citep{DepaoliEtAl2016}, \proglang{BUGS} \citep{LunnEtAl2009}, and \proglang{Nimble} \citep{deValpineEtAl2017}, are accessible through an \proglang{R} interface. A major advantage of such platforms is that samplers can be programmed simply by specifying a model and providing the data. In contrast, the traditional Gibbs sampler approach may require derivation, programming, and testing for each new model. However, general purpose platforms may not be well-suited to certain classes of models or to very large datasets. We will illustrate the use of \proglang{Stan} via the \pkg{rstan} package \citep{rstan2020} in addition to the Gibbs sampler from Section~\ref{sec:model}.

Some aspects of implementing STCOS analysis in \proglang{R} can be laborious and prone to error. To reduce the burden, we introduce the \pkg{stcos} package. The \pkg{stcos} package provides several major capabilities including: functions to compute overlap matrix $\vec{H}$ and adjacency matrix $\vec{W}$, basis functions to compute $\vec{S}$, construction of covariance $\vec{K}$, maximum likelihood estimation for the STCOS model, and an STCOS Gibbs sampler. Basis functions discussed in Section~\ref{sec:model} will be demonstrated shortly. Internal basis function calculations are carried out in \proglang{C++}, for efficiency, via the \pkg{Rcpp} and \pkg{RcppArmadillo} packages \citep{Eddelbuettel2013, EddelbuettelSanderson2014}. Matrices such as $\vec{H}$ and $\vec{S}$ are likely to be sparse in many STCOS applications; we use the \pkg{Matrix} package \citep{BatesMaechler2019} to support operations on sparse matrices.

We will now give an overview of the major STCOS computations which will be needed in \proglang{R}. Section~\ref{sec:columbia} will provide a demonstration connecting these pieces into a complete analysis. The following packages are assumed to be loaded in all coding examples.

\begin{CodeOutput}
R> library("sf")
R> library("dplyr")
R> library("stcos")
\end{CodeOutput}
A natural way to encode geographical features in source, fine-level, and target supports is via \code{sf} objects. Data associated with the supports can be embedded into \code{sf} objects to facilitate model preparation and graphical display. Therefore, our preferred workflow will be to produce \code{sf} objects with direct and model-based estimates. An example of a prepared source support object is as follows.
\begin{CodeOutput}
R> head(acs5_2013, 3)
Simple feature collection with 3 features and 8 fields
geometry type:  POLYGON
dimension:      XY
bbox:           xmin: -10280140 ymin: 4712766 xmax: -10277220 ymax: 4714750
CRS:            EPSG:3857
         geoid state county  tract blockgroup DirectEst DirectMOE DirectVar
1 290190005001    29    019 000500          1      9970      3157   3683788
2 290190005002    29    019 000500          2     12083      7048  18360194
3 290190006001    29    019 000600          1    105156     16979 106553987
                        geometry
1 POLYGON ((-10278231 4713772...
2 POLYGON ((-10279369 4713339...
3 POLYGON ((-10280135 4712926...
\end{CodeOutput}
Note that we have manipulated this output and some subsequent outputs to ensure that they fit on the page. The \code{CRS}  descriptor specifies a geographical coordinate system for the data. A number of standard coordinate systems are used to express geographical data, each having its own benefits and drawbacks. Coordinates such as latitude and longitude used in the global positioning system (GPS) describe points on the surface of the globe. Map projections provide two-dimensional representations of a region, which are convenient in many applications but necessarily distort the geography in some way. Conformal projections are designed to preserve local shape and are considered suitable for smaller domains, but distort areas when applied to large regions. On the other hand, equal-area projections are designed to preserve areas over large regions. To apply STCOS and other spatial-temporal methods, the analyst must select an appropriate coordinate system. We have chosen the Web Mercator projection (EPSG:3857) which is utilized in a number of online mapping services and is a variation of the conformal Mercator projection \citep{BattersbyEtAl2014}. Further discussion and references on coordinate systems can be found in \citet[Chapter~4]{BivandEtAl2013} and \citet[Chapter~3]{WallerGotway2004}.

All source, target, and fine-level supports in an analysis should use a common coordinate system so that they are compatible; this is not a limitation, as the analyst may transform an \code{sf} object from its original coordinates using the \code{sf::st_transform} function. Furthermore, methods described in this article are suited toward coordinates in a map projection rather than a globe representation. For example, the Euclidean distance utilized in \eqref{eqn:basis-spt} does not take into account the curvature of the Earth. An analysis using spherical coordinates---which may be appropriate for a larger-scale domain---might instead consider a great circle distance. Now that the importance of the coordinate system has been emphasized, coordinates will be considered as raw numerical values for the remainder of the article.

The last four lines of the previous display show a table with nine fields, where each row corresponds to an area (county) in the file. The \code{geometry} field contains details about the county's geography, which we typically will not want to manipulate directly. The fields \code{STATE} and \code{COUNTY} represent Federal Information Processing Standards (FIPS) codes for the state and county respectively, and \code{GEO_ID} is an identifier which combines the two. The fields \code{DirectEst}, \code{DirectMOE}, and \code{DirectVar} represent direct ACS estimates of median household income and an associated estimate of margin of error and variance. Preparation of such an \code{sf} object from geographical data and direct estimates will be discussed in Section~\ref{sec:assemble}.

A function \code{overlap_matrix} is provided to compute the $\vec{H}$ matrix.

\begin{CodeOutput}
R> H = overlap_matrix(dom1, dom2, proportion = TRUE)
\end{CodeOutput}
Here, \code{dom1} and \code{dom2} are \code{sf} objects which describe domains of areal units. The result is an \code{nrow(dom1)} by \code{nrow(dom2)} matrix. If \code{proportion = FALSE}, the entries represent the amount of area for each overlap; otherwise rows are normalized to proportions which sum to 1.

The \pkg{stcos} package provides several variations of the local bisquare basis functions discussed in Section~\ref{sec:model}. The following functions operate on data where space is represented at the point-level.

\begin{CodeOutput}
R> S = spatial_bisquare(dom, knots, w_s)
R> S = spacetime_bisquare(dom, knots, w_s, w_t)
\end{CodeOutput}
The function \code{spacetime_bisquare} implements \eqref{eqn:spacetime-bisquare-basis} which uses information both in space and time, while \code{spatial_bisquare} implements a space-only version
\begin{align}
\varphi_j(\vec{u}) =
\left[ 1 - \lVert \vec{u} - \vec{c}_j \rVert^2 / w^2 \right]^2  \cdot
I(\lVert \vec{u} - \vec{c}_j \rVert \leq w).
\label{eqn:spatial-bisquare-basis}
\end{align}
The object \code{dom} may either be a numerical matrix or an object of type \code{sf} or \code{sfc} containing points. In both cases, the first two columns/coordinates represent the spatial coordinates and the third represents time, if applicable. The object \code{knots} provides knot points, and may similarly be specified as either a numerical matrix or a \code{sf} or \code{sfc} object containing points. Coordinates systems for the points in \code{knots} are expected to be compatible with those in \code{dom}. Two-dimensional points are expected in \code{spatial_bisquare}, where each represents a $\vec{c}_j$. Similarly, three-dimensional points are expected in \code{spacetime_bisquare}, so that each represents a $(\vec{c}_j, g_j)$. The variables \code{w_s} and \code{w_t} correspond to the spatial and temporal radius, respectively. 

The following functions operate on data where space is represented at an area-level.

\begin{CodeOutput}
R> S = areal_spatial_bisquare(dom, knots, w_s, control = NULL)
R> S = areal_spacetime_bisquare(dom, period, knots, w_s, w_t, control = NULL)
\end{CodeOutput}
The function \code{areal_spacetime_bisquare} implements \eqref{eqn:basis-spt}, while \code{areal_spatial_bisquare} computes a space-only version
\[
\bar{\varphi}_j(A) = \frac{1}{|A|}\int_A \varphi_j(\vec{u}) d\vec{u},
\]
based on \eqref{eqn:spatial-bisquare-basis}. Here, the object \code{dom} is of type \code{sf} or \code{sfc} and provides the geography for one or more areal units. The variable \code{period} is a numeric vector which represents time period $\vec{v} = (v_{t-\ell+1}, \ldots, v_t)$ used to evaluate \eqref{eqn:basis-spt}. For example, if \code{dom} represents ACS 5-year estimates for 2017, we will take \code{period = 2013:2017}. The arguments \code{knots}, \code{w_s}, and \code{w_t} are interpreted similarly as in the point-level functions. The optional \code{control} argument is a list in which some additional factors can be adjusted, such as the number of Monte Carlo repetitions to used in the approximation. The remainder of the demonstration focuses on the STCOS analysis; further details and examples for the basis functions can be found in the \pkg{stcos} manual. 

Several options were described in Section~\ref{sec:model} to compute the covariance matrix $\vec{K}$; the \pkg{stcos} package provides functions to assist with the computations.

\begin{CodeOutput}
R> K = cov_approx_blockdiag(Qinv, S_fine)
R> K = cov_approx_randwalk(Qinv, S_fine)
\end{CodeOutput}
Both calls produce an $r \times r$ matrix. The call to \code{cov_approx_randwalk} corresponds to the random walk structure in \eqref{eqn:random-walk}, while \code{cov_approx_blockdiag} corresponds to \eqref{eqn:spatial-only} which assumes independence across time. The structure in \eqref{eqn:independence} which represents independent and identically distributed elements of $\vec{\eta}$ can be achieved with \code{K = Diagonal(n = r)}. The arguments \code{Qinv}, and \code{S_fine} correspond to the matrices $\vec{Q}^{-1}$ and $\vec{S}^*$ described in Section~\ref{sec:model}.
Note that the function \code{car_precision} in \pkg{stcos} can be used to compute $\vec{Q}$ from adjacency matrix $\vec{W}$.
\begin{CodeOutput}
R> Q = car_precision(W, tau = 0.9, scale = TRUE)
\end{CodeOutput}
The matrix $\vec{I} - \tau \vec{D}^{-1} \vec{W}$ is returned if \code{scale = TRUE}; otherwise $\vec{D} - \tau \vec{W}$ is returned.

Although we focus on Bayesian analysis, a function to compute maximum likelihood estimates (MLEs) is provided.

\begin{CodeOutput}
R> out = mle_stcos(z, v, H, S, K, init = list(sig2K = 1, sig2xi = 1))
R> sig2K_hat = out$sig2K_hat,
R> sig2xi_hat = out$sig2xi_hat,
R> mu_hat = out$mu_hat
\end{CodeOutput}
Some details on MLE computation are given in Appendix~\ref{sec:details}. MLE computation is often much quicker than Bayesian computation, and may provide good starting values for an MCMC sampler. Here, \code{H}, \code{S}, and \code{K} are the matrices $\vec{H}$, $\vec{S}$, and $\vec{K}$ described in \eqref{eqn:vector-form}, while \code{z} represents the vector $\vec{Z}$ and \code{v} is the diagonal of the matrix $\vec{V}$. The Gibbs sampler described in Section~\ref{sec:model} can be invoked using the \code{gibbs_stcos} function.

\begin{CodeOutput}
R> out = gibbs_stcos(z, v, H, S, Kinv = solve(K),
+  R = 10000, report_period = 1000, burn = 1000, thin = 10,
+  init = init)
R> muB_mcmc = out$muB_hist
R> eta_mcmc = out$eta_hist
R> xi_mcmc = out$xi_hist
R> sig2mu_mcmc = out$sig2mu_hist
R> sig2xi_mcmc = out$sig2xi_hist
R> sig2K_mcmc = out$sig2K_hist
\end{CodeOutput}
Some helper functions are provided to process the output from the Gibbs sampler.

\begin{CodeInput}
print(out)
logLik(out)
DIC(out)
E_mcmc = fitted(out, H_new, S_new)
Y_mcmc = predict(out, H_new, S_new)
\end{CodeInput}
The \code{print} function displays a brief summary of results from the sampler, while \code{logLik} computes the log-likelihood for each saved draw and \code{DIC} computes the Deviance information criterion \citep{SpiegelhalterEtAl2002} using saved draws. Let $\tilde{\vec{H}}$ be an $\tilde{N} \times n$ overlap matrix and $\tilde{\vec{S}}$ be an $\tilde{N} \times r$ basis matrix computed from target supports of interest, and \code{H_new} and \code{S_new} denote their representations in code. Let $\tilde{\vec{Y}}$ denote the vector composed of the $\tilde{N}$ latent process variables
\begin{align*}
\tilde{Y}_t^{(\ell)}(A) = \tilde{\vec{h}}(A)^\top \vec{\mu}_B + \tilde{\vec{s}}_t^{(\ell)}(A)^\top \vec{\eta} + \tilde{\xi}_t^{(\ell)}(A)
\end{align*}
associated with matrices $\tilde{\vec{H}}$ and $\tilde{\vec{S}}$. The \code{fitted} function produces draws from the posterior distribution of the mean
\begin{align*}
\E(\tilde{\vec{Y}} \mid \vec{\mu}_B, \vec{\eta}) = \tilde{\vec{H}} \vec{\mu}_B + \tilde{\vec{S}} \vec{\eta},
\end{align*}
so that \code{E_mcmc} is a matrix with $\tilde{N}$ columns where each row corresponds to a saved draw. Alternatively, the \code{predict} function produces draws from the posterior distribution of
\begin{align*}
\int \phi\left(\tilde{\vec{Y}} \mid \tilde{\vec{H}} \vec{\mu}_B + \tilde{\vec{S}} \vec{\eta}, \sigma_\xi^2 \vec{I} \right)
f(\vec{\mu}_B, \vec{\eta}, \sigma_\xi^2, \mid \vec{Z}, \vec{V}) \,
d\vec{\mu}_B \, d\vec{\eta} \, d\sigma_\xi^2.
\end{align*}

\section{Demonstration: City of Columbia neighborhoods}
\label{sec:columbia}
We now demonstrate an STCOS analysis on a small-scale but complete example using real data. Our target support consists of four neighborhoods in the City of Columbia in Boone County, Missouri. Geospatial data of the four neighborhoods has been provided by staff from the GIS Office for the City of Columbia. We would like to produce model-based estimates of median household income using observed ACS estimates from recent years. Specifically, we will consider 5-year ACS estimates at the block-group level for years 2013--2017 as our source supports, and will produce 5-year ACS estimates for year 2017 on the four neighborhoods as our target support.

The demonstration is split into several subsections. Section~\ref{sec:assemble} considers raw inputs---ACS direct estimates and geographical features---and discusses how they can be assembled into a useful form for the analysis. Section~\ref{sec:prepare} then prepares the inputs to the STCOS model: namely, $\vec{Z}$, $\vec{V}$, $\vec{H}$, $\vec{S}$, and $\vec{K}$. Section~\ref{sec:gibbs} uses the Gibbs sampler in the \pkg{stcos} package to produce draws from the posterior distribution of STCOS parameters and consequently obtain the desired results from the analysis. Section~\ref{sec:stan} uses the \proglang{Stan} platform via the \pkg{rstan} package as an alternative method to obtain results. Finally, Section~\ref{sec:mle} compares the MLE to Bayesian results.

\subsection{Assembling the data}
\label{sec:assemble}
We now briefly discuss how to gather geospatial data and ACS estimates and assemble them into \code{sf} objects for convenience. This is not intended to be an extensive guide, as numerous options to gather data (e.g.~portals, APIs, and \proglang{R} packages) are available and continue to evolve. In Section~\ref{sec:prepare}, we will make use of datasets which have been constructed for the demonstration.

Geospatial data representing the target support were provided in the shapefile format \citep{ESRI1998}. We now read the file and transform it to a projection of choice.

\begin{CodeOutput}
R> neighbs = st_read("neighborhoods.shp") 
\end{CodeOutput}
To prepare the source supports, we must gather ACS estimates and corresponding geographical features. For this example, ACS estimates can be requested from the Census Bureau's Data API. The interface and data availability of the API are subject to change in the future, and examples shown next may need to be modified accordingly. \citet{CensusAPI2019} provide a user guide with current specifications, including URL query format, available datasets, and codes for variable names. Note that limits are placed on the frequency and size of queries for unregistered users; higher-volume users may register for an API key to reduce restrictions. Estimates for our source supports can be requested from the API by constructing URLs with the following formats.

\begin{CodeOutput}
R> est_url = paste('https://api.census.gov/data/', year,
+  '/acs/acs5?get=NAME,B19013_001E&for=block
+  sep = '')
R> moe_url = paste('https://api.census.gov/data/', year,
+  '/acs/acs5?get=NAME,B19013_001M&for=block
+  sep = '')
\end{CodeOutput}
Data for the direct point estimates and corresponding MOEs have been gathered using two separate calls to the API. The FIPS code for Missouri is \code{29} and the code for Boone County is \code{019}. The variable \code{B19013_001E} represents point estimates for ``Median household income in the past 12 months'', and \code{B19013_001M} represents corresponding MOEs. We can request the years of interest by taking \code{year} to be values \code{2013} through \code{2017}. We use the \code{jsonlite} package \citep{Ooms2014} to call the API and load the results into an \proglang{R} \code{data.frame}.

\begin{CodeOutput}
R> json_data = jsonlite::fromJSON(est_url)
R> est_dat = data.frame(json_data[-1,])
R> colnames(est_dat) = json_data[1,]

R> json_data = jsonlite::fromJSON(moe_url)
R> moe_dat = data.frame(json_data[-1,])
R> colnames(moe_dat) = json_data[1,]
\end{CodeOutput}
We now merge \code{est\_dat} and \code{moe\_dat} together into a single \code{data.frame}.

\begin{CodeInput}
my_dat = est_dat 
	inner_join(moe_dat, by = c('state' = 'state', 'county' = 'county',
		'tract' = 'tract', 'block group' = 'block group')) 
	select(state, county, tract, blockgroup = `block group`,
		DirectEst = B19013_001E, DirectMOE = B19013_001M) 
	mutate(DirectEst = as.numeric(DirectEst)) 
	mutate(DirectMOE = as.numeric(DirectMOE)) 
	mutate(DirectEst = replace(DirectEst, DirectEst < 0, NA)) 
	mutate(DirectMOE = replace(DirectMOE, DirectMOE < 0, NA)) 
	mutate(DirectVar = (DirectMOE / qnorm(0.95))^2) 
	arrange(tract, blockgroup)
\end{CodeInput}
There are a few details to mention in this data manipulation. We have taken some care because there is a space in the variable name \code{block group}, and because variables in the ACS data are interpreted as strings by default. We have transformed the MOE to a variance estimate, noting that published MOEs are to be interpreted as margins of error from $\alpha = 0.90$ confidence intervals \citep{ACSUsage2018};
i.e.,
\begin{align*}
\text{MOE} = z_{\alpha/2} \sqrt{\hat{\text{V}}}
\quad \iff \quad
\hat{\text{V}} = \left( \frac{\text{MOE}}{z_{\alpha/2}} \right)^2,
\end{align*}
where $z_{\alpha/2} \approx 1.645$. We have also taken care to handle special values coded in the data; namely, large negative numbers for estimates and MOEs are returned by the API when estimates are not available,%
\footnote{\url{https://census.gov/data/developers/data-sets/acs-1year/notes-on-acs-estimate-and-annotation-values.html}}
which we convert to \code{NA}. We sort the entries by tract and block group for readability. The resulting \code{data.frame} appears as follows.

\begin{CodeOutput}
R> head(my_dat)
  state county  tract blockgroup DirectEst DirectMOE DirectVar
1    29    019 000200          1     41063      6512  15673799
2    29    019 000200          2     31250      6978  17997303
3    29    019 000300          1     19420      7643  21591022
4    29    019 000300          2        NA        NA        NA
5    29    019 000300          3     21369     14558  78333750
6    29    019 000500          1     10995      5563  11438356
\end{CodeOutput}
The presence of \code{NA} values in direct estimates---such as in tract 000300, blockgroup 2---can vary over area, year, and period. \code{NA} values will be addressed in Section~\ref{sec:prepare}, before the analysis. The \code{tigris} package \citep{Walker2018} provides a convenient way to request shapefiles from the Census Bureau Tiger/Line database. It is necessary that all supports are converted to a common coordinate system for the analysis, so use the function \code{st::transform} to match the projection we used earlier in the target support.

\begin{CodeInput}
my_shp = tigris::block_groups(state = '29', county = '019', year = 2017) 
	st_as_sf() 
	st_transform(crs = 3857)
\end{CodeInput}
Now we augment the geospatial data with direct point estimates, MOEs, and variance estimates obtained earlier.

\begin{CodeInput}
acs5_2017 = my_shp 
	inner_join(my_dat, by = c('STATEFP' = 'state', 'COUNTYFP' = 'county',
		'TRACTCE' = 'tract', 'BLKGRPCE' = 'blockgroup')) 
	select(geoid = GEOID, state = STATEFP, county = COUNTYFP,
		tract = TRACTCE, blockgroup = BLKGRPCE,
		DirectEst, DirectMOE, DirectVar)
\end{CodeInput}
The resulting \code{acs5_2017} is an object of type \code{sf}, whose first few entries are as follows (\code{geometry} column is not shown).

\begin{CodeOutput}
R> head(acs5_2017)
Simple feature collection with 6 features and 8 fields
geometry type:  POLYGON
dimension:      XY
bbox:           xmin: -10280690 ymin: 4712766 xmax: -10256290 ymax: 4752109
CRS:            EPSG:3857
         geoid state county  tract blockgroup DirectEst DirectMOE DirectVar
1 290190005001    29    019 000500          1     10995      5563  11438356
2 290190005002    29    019 000500          2     13872      9503  33378510
3 290190006001    29    019 000600          1     45208     39073 564285643
4 290190006002    29    019 000600          2    107500     19868 145899495
5 290190020002    29    019 002000          2     62237     13529  67651414
6 290190020003    29    019 002000          3     51019     11166  46082999
\end{CodeOutput}

\subsection{Preparing the analysis}
\label{sec:prepare}
The steps in Section~\ref{sec:assemble} can be repeated so that all target, source, and fine-level supports are assembled as \code{sf} objects. The \pkg{stcos} package includes the following pre-constructed datasets to facilitate our demonstration.

\begin{CodeOutput}
R> data("acs_sf")
R> ls(pattern = "acs5_.*")
[1] "acs5_2013" "acs5_2014" "acs5_2015" "acs5_2016" "acs5_2017"
R> data("columbia_neighbs")
R> ls(pattern = "columbia")
[1] "columbia_neighbs"	
\end{CodeOutput}
Before we begin to prepare the terms in \eqref{eqn:vector-form} for the STCOS model, let us create a version of the source supports with \code{NA} estimates removed. This will help to avoid complications in model fitting.

\begin{CodeInput}
source_2013 = acs5_2013 
source_2014 = acs5_2014 
source_2015 = acs5_2015 
source_2016 = acs5_2016 
source_2017 = acs5_2017 
\end{CodeInput}
We will choose our fine-level support based on the \code{acs5_2017} geography; i.e.~the block group level geography for Boone County in 2017. However, because we have dropped some areas from the source supports, we should check for areas in \code{acs5_2017} which have zero or very little overlap with any areas in the source supports. If we identify such areas, we will drop them from the analysis to avoid rank-deficiency of the $\vec{H}$ matrix.

\begin{CodeInput}
U = rbind(
	overlap_matrix(source_2013, acs5_2017, proportion = FALSE),
	overlap_matrix(source_2014, acs5_2017, proportion = FALSE),
	overlap_matrix(source_2015, acs5_2017, proportion = FALSE),
	overlap_matrix(source_2016, acs5_2017, proportion = FALSE),
	overlap_matrix(source_2017, acs5_2017, proportion = FALSE)
)
dom_fine = acs5_2017 
	mutate(keep = (colSums(U) >= 10)) 
	filter(keep == TRUE) 
	select(-c("DirectEst", "DirectMOE", "DirectVar", "keep"))
n = nrow(dom_fine)
\end{CodeInput}
This creates \code{dom_fine} as a version of \code{acs5_2017}, excluding two block-groups having very little overlap (less than 10 square meters) with any of the source support areas, and ignoring the columns for the direct estimates, MOEs, and variance estimates.

The overlap matrix $\vec{H}$ for the analysis can now be created as follows.

\begin{CodeInput}
H = rbind(
	overlap_matrix(source_2013, dom_fine),
	overlap_matrix(source_2014, dom_fine),
	overlap_matrix(source_2015, dom_fine),
	overlap_matrix(source_2016, dom_fine),
	overlap_matrix(source_2017, dom_fine)
)
N = nrow(H)
\end{CodeInput}
To construct a bisquare basis, we must select spatio-temporal knot points. To select spatial knot points, we first draw a large number of points uniformly over the fine-level domain using the \code{st_sample} function. We then use the \code{cover.design} function in the \pkg{fields} package, which finds a subset of these points to fill the space.

\begin{CodeInput}
u = st_sample(dom_fine, size = 2000)
P = st_coordinates(u)
out = fields::cover.design(P, 200)
knots_sp = out$design
\end{CodeInput}
To select the spatial radius $w_s$, we compute the $0.05$ quantile of the pairwise distances among the rows of \code{knots_sp}, as discussed in Section~\ref{sec:model}.

\begin{CodeInput}
ws_tilde = 1
D = dist(knots_sp)
w_s =  ws_tilde * quantile(D[D > 0], prob = 0.05, type = 1)
\end{CodeInput}
Alternatively, evenly spaced points can be achieved with the hexagonal sampling method in the \code{sf::st_sample} function. This is quicker than \code{fields::cover.design}.

\begin{CodeInput}
u = st_sample(dom_fine, 200, type = "hexagonal")
knots_sp_alt = st_coordinates(u)
D = dist(knots_sp_alt)
w_s_alt = ws_tilde * quantile(D[D > 0], prob = 0.05, type = 1)
\end{CodeInput}
Figure~\ref{fig:spatial-knots} illustrates the selected spatial knot points and radius using both the space-filling method and hexagonal sampling. Both methods succeed in creating a grid of evenly-spaced points, although the latter follow a more strict pattern. More evenly-spaced points can also be obtained with the space-filling method by taking an initial sample size larger than our selection of 2,000.

We choose the temporal knot points to be $(2009, 2009.5, \ldots, 2016.5, 2017)$, covering the years relevant to the 5-year ACS estimates for years 2013--2017.

\begin{CodeInput}
knots_t = seq(2009, 2017, by = 0.5)
w_t = 1
\end{CodeInput}
More sophisticated date/time functions can assist in constructing temporal knots, though a numerical representation is ultimately needed. An alternative choice for temporal knots created with \code{Date} objects is given next. When treated as numerical, such objects represent days elapsed since January 1, 1970. Here we may again use the quantile approach to determine a radius which is suitable for this unit of time.

\begin{CodeInput}
dates = seq(as.Date("2009-01-01"), as.Date("2017-01-01"), by = "6 months")
knots_t_alt = as.numeric(dates)
wt_tilde_alt = 1
D = dist(knots_t_alt)
w_t_alt = wt_tilde_alt * quantile(D[D > 0], prob = 0.05, type = 1)
\end{CodeInput}
Now we use the \code{merge} function in the \pkg{base} package \citep{Rcore2020} to perform a Cartesian join between the spatial knots \code{knots_sp} and temporal knots \code{knots_t}, which yields the set of spatio-temporal knots.

\begin{CodeInput}
knots = merge(knots_sp, knots_t)
\end{CodeInput}
Now, we use the basis functions to compute the design matrix $\vec{S}$.

\begin{CodeInput}
bs_ctrl = list(mc_reps = 500)
S_full = rbind(
	areal_spacetime_bisquare(source_2013, 2009:2013, knots, w_s, w_t, bs_ctrl),
	areal_spacetime_bisquare(source_2014, 2010:2014, knots, w_s, w_t, bs_ctrl),
	areal_spacetime_bisquare(source_2015, 2011:2015, knots, w_s, w_t, bs_ctrl),
	areal_spacetime_bisquare(source_2016, 2012:2016, knots, w_s, w_t, bs_ctrl),
	areal_spacetime_bisquare(source_2017, 2013:2017, knots, w_s, w_t, bs_ctrl)
)
\end{CodeInput}
We can also compute the design matrix $\vec{S}^*$ on the fine-level support, which is needed to compute $\vec{K}$ under some of the possible structures.

\begin{CodeInput}
S_fine_full = rbind(
	areal_spacetime_bisquare(dom_fine, 2009, knots, w_s, w_t, bs_ctrl),
	areal_spacetime_bisquare(dom_fine, 2010, knots, w_s, w_t, bs_ctrl),
	areal_spacetime_bisquare(dom_fine, 2011, knots, w_s, w_t, bs_ctrl),
	areal_spacetime_bisquare(dom_fine, 2012, knots, w_s, w_t, bs_ctrl),
	areal_spacetime_bisquare(dom_fine, 2013, knots, w_s, w_t, bs_ctrl),
	areal_spacetime_bisquare(dom_fine, 2014, knots, w_s, w_t, bs_ctrl),
	areal_spacetime_bisquare(dom_fine, 2015, knots, w_s, w_t, bs_ctrl),
	areal_spacetime_bisquare(dom_fine, 2016, knots, w_s, w_t, bs_ctrl),
	areal_spacetime_bisquare(dom_fine, 2017, knots, w_s, w_t, bs_ctrl)
)
\end{CodeInput}
Next we need vectors \code{z} and \code{v} to represent the direct point estimates and associated variance estimates.

\begin{CodeInput}
z = c(source_2013$DirectEst, source_2014$DirectEst, source_2015$DirectEst,
	source_2016$DirectEst, source_2017$DirectEst)
v = c(source_2013$DirectVar, source_2014$DirectVar, source_2015$DirectVar,
	source_2016$DirectVar, source_2017$DirectVar)
\end{CodeInput}
Because \code{z} and \code{v} contain rather large numbers, we standardize \code{z} for the analysis and make a corresponding transformation to \code{v}.

\begin{CodeInput}
z_scaled = scale(z)
v_scaled = v / var(z)
\end{CodeInput}
The expression for \code{v_scaled} arises from considering $\Var[a^{-1/2}(Z_i - b)] = a^{-1} \Var(Z_i)$ for constants $a > 0$ and $b \in \mathbb{R}$, which is estimated by $a^{-1} \vec{e}_i^\top \vec{V} \vec{e}_i$ with $\vec{e}_i$ the $i$th column of an $N \times N$ identity matrix. The design matrix $\vec{S}$ with our choice of basis function can have a large number of columns and a high degree of multicollinearity; if not addressed, this can lead to poor mixing in the MCMC sampler. A simple workaround is to reduce the dimension of $\vec{S}$ using principal components analysis (PCA). First we compute the reduction, using 65\% of the variability, as expressed as a proportion of the eigenvalues.

\begin{CodeInput}
eig = eigen(t(S_full) 
idx_S = which(cumsum(eig$values) / sum(eig$values) < 0.65)
\end{CodeInput}
Figure~\ref{fig:pca-reduction} shows that this can be accomplished by projecting from the original 3,400 columns to $r=19$ columns. Now we apply the reduction to $\vec{S}$ as well as $\vec{S}^*$. 

\begin{CodeInput}
Tx_S = eig$vectors[,idx_S]
S = S_full 
S_fine = S_fine_full 
r = ncol(S)
\end{CodeInput}
The last ingredient needed to run the analysis is the matrix $\vec{K}$. We will use the random walk structure in \eqref{eqn:random-walk} to express both spatial and temporal dependence. First, let us compute the covariance matrix $\vec{Q}^{-1}$ of a CAR process for the fine-level support.

\begin{CodeInput}
W = adjacency_matrix(dom_fine)
Q = car_precision(W, tau = 0.9, scale = TRUE)
Qinv = solve(Q)
\end{CodeInput}
Now compute $\vec{K}$ using $\vec{Q}^{-1}$ and $\vec{S}^*$.

\begin{CodeInput}
K = cov_approx_randwalk(Qinv, S_fine)
\end{CodeInput}

\subsection{Fitting with Gibbs sampler}
\label{sec:gibbs}
We now proceed to run the Gibbs sampler. We will produce a chain of 10,000 iterations, discard the first 2,000 draws, and keep one of every 10th remaining draw. We will use hyperparameters $a_\mu = 1$, $b_\mu = 1$, $a_K = 1$, $b_K = 2$, $a_\xi = 1$, and $b_\xi = 2$.

\begin{CodeOutput}
R> hyper = list(a_sig2K = 1, b_sig2K = 2, a_sig2xi = 1, b_sig2xi = 2,
+  a_sig2mu = 1, b_sig2mu = 2)
R> gibbs_out = gibbs_stcos(z = z_scaled, v = v_scaled, H = H, S = S,
+  Kinv = Kinv, R = 10000, report_period = 2000, burn = 2000,
+  thin = 10, hyper = hyper)
2020-05-17 17:11:15 - Begin Gibbs sampler
2020-05-17 17:11:52 - Begin iteration 2000
2020-05-17 17:12:29 - Begin iteration 4000
2020-05-17 17:12:50 - Begin iteration 6000
2020-05-17 17:13:09 - Begin iteration 8000
2020-05-17 17:13:28 - Begin iteration 10000
2020-05-17 17:13:28 - Finished Gibbs sampler
R> print(gibbs_out)
Fit for STCOS model
--
             Mean          SD       2.5
sig2mu 0.52574414 0.094603198 0.36964018 0.45835116 0.57962109 0.74252701
sig2K  1.12632689 0.829211900 0.34238233 0.62142495 1.34628662 3.35183509
sig2xi 0.04368451 0.005022067 0.03485603 0.04021765 0.04669926 0.05533603
--
Saved 800 draws
DIC: 210.798981
Elapsed time: 00:02:08
\end{CodeOutput}
The \code{mcmc} class in the \pkg{coda} package \citep{PlummerEtAl2006} helps to manage and plot the draws.

\begin{CodeInput}
library("coda")
varcomps_mcmc = mcmc(data.frame(
	sig2mu = gibbs_out$sig2mu_hist,
	sig2xi = gibbs_out$sig2xi_hist,
	sig2K = gibbs_out$sig2K_hist
))
plot(varcomps_mcmc)
\end{CodeInput}
Figure~\ref{fig:trace-varcomps} displays trace and density plots of the variance components $\sigma_\mu^2$, $\sigma_\xi^2$, and $\sigma_K^2$.

Using the fitted model, we can produce model-based estimates on target supports of interest. In this example, we would like to produce 5-year 2017 estimates for our four neighborhoods in Boone County: Central, East, North, and Paris. The following code computes model-based estimates for these areas and embeds them into an \code{sf} object for plotting.

\begin{CodeInput}
nb_out = neighbs
H_new = overlap_matrix(nb_out, dom_fine)               # New overlap
S_new_full = areal_spacetime_bisquare(nb_out,
	2013:2017, knots, w_s, w_t, bs_ctrl)               # New basis fn
S_new = S_new_full 

EY_scaled = fitted(gibbs_out, H_new, S_new)            # Get draws of E(Y)
EY = sd(z) * EY_scaled + mean(z)                       # Uncenter and unscale

alpha = 0.10
nb_out$E_mean = colMeans(EY)						   # Point estimates
nb_out$E_sd = apply(EY, 2, sd)						   # SDs
nb_out$E_lo = apply(EY, 2, quantile, prob = alpha/2)   # Credible interval lo
nb_out$E_hi = apply(EY, 2, quantile, prob = 1-alpha/2) # Credible interval hi
nb_out$E_median = apply(EY, 2, median)				   # Median
nb_out$E_moe = apply(EY, 2, sd) * qnorm(1-alpha/2)	   # MOE
\end{CodeInput}

The objects \code{H_new} and \code{S_new} represent design matrices $\tilde{\vec{H}}$ and $\tilde{\vec{S}}$, respectively, based on the geography of neighborhoods. The \code{fitted} function was then used to produce draws from the posterior distribution of $\E(\tilde{\vec{Y}})$. We then transformed the resulting estimates back to the original scale, having previously centered and scaled them before model fitting. The remainder of the code display summarizes draws of the posterior mean in several ways, obtaining a model-based estimate of its median, mean, standard deviation, MOE ($z_{\alpha/2} \times \text{standard deviation}$), and a 90\% credible interval. The resulting \code{sf} object is displayed below (\code{geometry} column is not shown).

\begin{CodeOutput}
R> print(nb_out)
Simple feature collection with 4 features and 7 fields
geometry type:  POLYGON
dimension:      XY
bbox:           xmin: -10280270 ymin: 4715036 xmax: -10269750 ymax: 4723860
CRS:            EPSG:3857
           Region   E_mean     E_sd     E_lo     E_hi E_median    E_moe
1         Central 26705.85 1921.623 23456.96 29709.53 26720.70 3160.788
2            East 44127.78 2450.811 40155.21 47983.46 44186.03 4031.225
3           North 44171.24 2863.373 39519.40 48933.11 44040.79 4709.829
4 Paris63Corridor 20386.72 3663.098 14448.61 26325.73 20309.78 6025.261
\end{CodeOutput}
We are now ready to plot our estimates. The code to reproduce our plots is somewhat lengthy and can be found in the supplemental materials. First we compare direct and model-based estimates for 2017 source supports to assess whether the model fit is reasonable. Figures~\ref{fig:compare2017-direct} and \ref{fig:compare2017-model} show maps of the two sets of estimates. Figures~\ref{fig:compare2014-scatter} and \ref{fig:compare2017-scatter} compare the two sets of estimates via scatter plots; year 2014 and year 2017 estimates are shown for comparison. Variation between direct and model-based estimates appears to be smaller for year 2014, with the exception of the block group with the largest direct estimate that year. Finally, Figure~\ref{fig:areas-of-interest} shows the four neighborhoods of our target support in the context of the 2017 5-year direct estimates. This provides a visual aid to assess plausibility of the target support estimates. The North and East neighborhoods appear to be in the immediate vicinity of block groups with higher median household income than the West and Paris neighborhoods.

\subsection[Fitting with Stan]{Fitting with \pkg{Stan}}
\label{sec:stan}
We will now refit the model from Section~\ref{sec:gibbs} using \pkg{Stan} instead of the Gibbs sampler. First, we will need a \pkg{Stan} model specification. We will create a file named \code{stcos.stan} with the following contents.

\begin{CodeInput}
data {
	int<lower=0> N;  int<lower=0> n;  int<lower=0> r;
	vector[N] z;     vector[N] v;     matrix[N,n] H;
	matrix[N,r] S;   matrix[r,r] K;   real alpha_K;
	real beta_K;     real alpha_xi;   real beta_xi;
	real alpha_mu;   real beta_mu;
}
parameters {
	vector[n] mu;    real<lower=0> sig2K;
	vector[r] eta;   real<lower=0> sig2xi;
	vector[N] xi;    real<lower=0> sig2mu;
}
model {
	sig2K ~ inv_gamma(alpha_K, beta_K);
	sig2xi ~ inv_gamma(alpha_xi, beta_xi);
	sig2mu ~ inv_gamma(alpha_mu, beta_mu);
	eta ~ multi_normal(rep_vector(0,r), sig2K * K);
	mu ~ normal(0, sqrt(sig2mu));
	xi ~ normal(0, sqrt(sig2xi));
	z ~ normal(H*mu + S*eta + xi, sqrt(v));
}
\end{CodeInput}
Now, in \code{R}, pass the data and model specification to \code{stan} to initiate fitting.

\begin{CodeInput}
library("rstan")
stan_dat = list(
	N = N, n = n, r = r, z = z_scaled, v = v_scaled, H = as.matrix(H),
	S = as.matrix(S), K = as.matrix(K),
	alpha_K = 1, beta_K = 2, alpha_xi = 1, beta_xi = 2, alpha_mu = 1, beta_mu = 2
)
\end{CodeInput}
\begin{CodeOutput}
R> stan_out = stan(file = "stcos.stan", data = stan_dat, iter = 2000, chains = 2)
SAMPLING FOR MODEL 'stcos' NOW (CHAIN 1).
...
Chain 1:  Elapsed Time: 11.8561 seconds (Warm-up)
Chain 1:                10.5266 seconds (Sampling)
Chain 1:                22.3827 seconds (Total)
...
SAMPLING FOR MODEL 'stcos' NOW (CHAIN 2).
...
Chain 2:  Elapsed Time: 10.9951 seconds (Warm-up)
Chain 2:                9.87796 seconds (Sampling)
Chain 2:                20.873 seconds (Total)
\end{CodeOutput}
Here we have requested two chains of length 2,000 each. In addition to the time needed for sampling, \pkg{Stan} may require time to compile the model specification. Upon successful completion of sampling, the following \code{R} code can be used to extract draws and produce results.

\begin{CodeInput}
stan_draws = extract(stan_out, pars = c("mu", "eta"), permuted = TRUE)

nb_out = neighbs
H_new = overlap_matrix(nb_out, dom_fine)               # New overlap
S_new_full = areal_spacetime_bisquare(nb_out,
	2013:2017, knots, w_s, w_t, bs_ctrl)               # New basis fn
S_new = S_new_full 

EY_scaled = stan_draws$mu 
    stan_draws$eta 
EY = sd(z) * EY_scaled + mean(z)                       # Uncenter and unscale

alpha = 0.10
nb_out$E_mean = colMeans(EY)                           # Point estimates
nb_out$E_sd = apply(EY, 2, sd)                         # SDs
nb_out$E_lo = apply(EY, 2, quantile, prob = alpha/2)   # Credible interval lo
nb_out$E_hi = apply(EY, 2, quantile, prob = 1-alpha/2) # Credible interval hi
nb_out$E_median = apply(EY, 2, median)                 # Median
nb_out$E_moe = apply(EY, 2, sd) * qnorm(1-alpha/2)     # MOE
\end{CodeInput}
The result of \code{print(nb_out)} can be compared to the corresponding output from the Gibbs sampler in Section~\ref{sec:gibbs}.

\subsection{Fitting with Maximum Likelihood}
\label{sec:mle}
Finally, we compute maximum likelihood estimates to compare to our Bayesian results.

\begin{CodeOutput}
R> mle_out = mle_stcos(z_scaled, v_scaled, H, S, K)
R> print(mle_out$sig2K_hat)
[1] 1.310004e-11
R> print(mle_out$sig2xi_hat)
[1] 1.225813e-11
R> print(mle_out$loglik)
[1] 67.29006
\end{CodeOutput}
Estimates for both $\sigma_\xi^2$ and $\sigma_K^2$ are very small, which indicates that the direct variance estimates $\vec{V}$ are capturing much of the variability among $\vec{Z}$. This can be contrasted with the Bayesian approach, which finds a non-zero effect of $\sigma_\xi^2$ and $\sigma_K^2$ through the addition of prior information. The following code extracts the MLE $\hat{\vec{\mu}}_B$, computes estimates $\tilde{\vec{H}} \hat{\vec{\mu}}$ of the four neighborhoods, and transforms those estimates to the original scale of the direct estimates.

\begin{CodeInput}
H_new = overlap_matrix(neighbs, dom_fine)
mu_hat = mle_out$mu_hat
z_hat_scaled = as.numeric(H_new 
z_hat = sd(z) * z_hat_scaled + mean(z)
\end{CodeInput}
Figure~\ref{fig:mle_vs_gibbs} plots MLEs with box plots of corresponding saved draws from the Gibbs sampler. Figure~\ref{fig:mle_vs_gibbs_mu_part1} plots the first 10 components of $\hat{\vec{\mu}}_B$, while Figure~\ref{fig:mle_vs_gibbs_neighb} plots the four components of $\tilde{\vec{H}} \hat{\vec{\mu}}$ corresponding to neighborhoods. Code to reproduce Figure~\ref{fig:mle_vs_gibbs} is provided in the supplemental materials. Bayesian and maximum likelihood estimates are not seen to be vastly different, and we anticipate that they would become closer as the total number of observations $N$ becomes large relative to $n_B$. In the current setting where $N = 421$ and $n_B = 85$, we would recommend the Bayesian approach.

\section{Conclusions}
\label{sec:conclusions}
In this article, we have demonstrated a complete implementation of STCOS methodology for \proglang{R} users. We worked through a small example to estimate median household income in several neighborhoods in the City of Columbia in Boone County, MO. Established \proglang{R} packages such as \pkg{sf}, \pkg{dplyr}, \pkg{Matrix}, and \pkg{rstan} were instrumental in the process, from initially gathering the data, to carrying out the MCMC, to placing results into a usable form. The \pkg{stcos} package was introduced to assist with some intricate programming steps not covered by other packages, especially computing areal spatio-temporal basis functions. Use of the highlighted tools significantly reduces the learning curve to program an analysis; however, some technical experience and effort are still required for a successful implementation. Future work may involve additional improvements to the \pkg{stcos} package for efficiency and usability, as well as software support for other spatial and spatio-temporal methodologies.

\section*{Acknowledgements}
This research was partially supported by the U.S. National Science Foundation (NSF) and the U.S. Census Bureau under NSF grant SES-1132031, funded through the NSF-Census Research Network (NCRN) program, and NSF Awards SES-1853096 and SES-1853099. This article is released to inform interested parties of ongoing research and to encourage discussion. The views expressed on statistical issues are those of the authors and not the NSF or U.S. Census Bureau. The authors thank Taylor Bowen and Toni Messina from the Office of Information Technology/GIS, City of Columbia, Missouri for supplying the shapefile used in the case study and for useful discussion.


\begin{thebibliography}{35}
\providecommand{\natexlab}[1]{#1}
\providecommand{\url}[1]{\texttt{#1}}
\expandafter\ifx\csname urlstyle\endcsname\relax
  \providecommand{\doi}[1]{doi: #1}\else
  \providecommand{\doi}{doi: \begingroup \urlstyle{rm}\Url}\fi

\bibitem[Banerjee et~al.(2014)Banerjee, Carlin, and Gelfand]{BanerjeeEtAl2014}
Sudipto Banerjee, Bradley~P. Carlin, and Alan~E. Gelfand.
\newblock \emph{Hierarchical modeling and analysis for spatial data}.
\newblock Chapman and Hall/CRC, 2nd edition, 2014.

\bibitem[Bates and Maechler(2019)]{BatesMaechler2019}
Douglas Bates and Martin Maechler.
\newblock \emph{\pkg{Matrix}: Sparse and Dense Matrix Classes and Methods}, 2019.
\newblock URL \url{https://CRAN.R-project.org/package=Matrix}.
\newblock \proglang{R} package version 1.2-18.

\bibitem[Battersby et~al.(2014)Battersby, Finn, Usery, and
  Yamamoto]{BattersbyEtAl2014}
Sarah~E. Battersby, Michael~P. Finn, E.~Lynn Usery, and Kristina~H. Yamamoto.
\newblock Implications of {W}eb {M}ercator and its use in online mapping.
\newblock \emph{Cartographica: The International Journal for Geographic
  Information and Geovisualization}, 49\penalty0 (2):\penalty0 85--101, 2014.

\bibitem[Bivand et~al.(2013)Bivand, Pebesma, and
  G\'{o}mez-Rubio]{BivandEtAl2013}
Roger~S. Bivand, Edzer Pebesma, and Virgilio G\'{o}mez-Rubio.
\newblock \emph{Applied Spatial Data Analysis with \proglang{R}}.
\newblock Springer, 2nd edition, 2013.

\bibitem[Bradley et~al.(2015{\natexlab{a}})Bradley, Holan, and
  Wikle]{BradleyHolanWikle2015AOAS}
Jonathan~R. Bradley, Scott~H. Holan, and Christopher~K. Wikle.
\newblock Multivariate spatio-temporal models for high-dimensional areal data
  with application to {L}ongitudinal {E}mployer-{H}ousehold {D}ynamics.
\newblock \emph{Annals of Applied Statistics}, 9\penalty0 (4):\penalty0
  1761--1791, 2015{\natexlab{a}}.

\bibitem[Bradley et~al.(2015{\natexlab{b}})Bradley, Wikle, and
  Holan]{BradleyWikleHolan2015Stat}
Jonathan~R. Bradley, Christopher~K. Wikle, and Scott~H. Holan.
\newblock Spatio-temporal change of support with application to {A}merican
  {C}ommunity {S}urvey multi-year period estimates.
\newblock \emph{Stat}, 4\penalty0 (1):\penalty0 255--270, 2015{\natexlab{b}}.

\bibitem[Breakstone and Anderson(2019)]{CensusAPI2019}
Carole~D. Breakstone and Tammy~S. Anderson.
\newblock \emph{Census Data {API} User Guide}, July 2019.
\newblock URL
  \url{https://www.census.gov/data/developers/guidance/api-user-guide.html}.
\newblock Version 1.6.

\bibitem[Brunsdon(2014)]{Brunsdon2014}
Chris Brunsdon.
\newblock \emph{pycno: Pycnophylactic Interpolation}, 2014.
\newblock URL \url{https://CRAN.R-project.org/package=pycno}.
\newblock R package version 1.2.

\bibitem[Carpenter et~al.(2017)Carpenter, Gelman, Hoffman, Lee, Goodrich,
  Betancourt, Brubaker, Guo, Li, and Riddell]{CarpenterEtAl2017}
Bob Carpenter, Andrew Gelman, Matthew Hoffman, Daniel Lee, Ben Goodrich,
  Michael Betancourt, Marcus Brubaker, Jiqiang Guo, Peter Li, and Allen
  Riddell.
\newblock Stan: A probabilistic programming language.
\newblock \emph{Journal of Statistical Software}, 76\penalty0 (1):\penalty0
  1--32, 2017.

\bibitem[Cortes et~al.(2019)Cortes, Rey, and Knaap]{CortesEtAl2019}
Renan~Xavier Cortes, Sergio Rey, and Eli Knaap.
\newblock \emph{pysal/tobler: Tobler initial release}, September 2019.
\newblock URL \url{https://dx.doi.org/10.5281/zenodo.3386577}.

\bibitem[Cressie and Wikle(2011)]{CressieWikle2011}
Noel Cressie and Christopher~K. Wikle.
\newblock \emph{Statistics for Spatio-Temporal Data}.
\newblock Wiley, 2011.

\bibitem[de~Valpine et~al.(2017)de~Valpine, Turek, Paciorek, Anderson-Bergman,
  Lang, and Bodik]{deValpineEtAl2017}
Perry de~Valpine, Daniel Turek, Christopher~J. Paciorek, Clifford
  Anderson-Bergman, Duncan~Temple Lang, and Rastislav Bodik.
\newblock Programming with models: Writing statistical algorithms for general
  model structures with {NIMBLE}.
\newblock \emph{Journal of Computational and Graphical Statistics}, 26\penalty0
  (2):\penalty0 403--413, 2017.

\bibitem[Depaoli et~al.(2016)Depaoli, Clifton, and Cobb]{DepaoliEtAl2016}
Sarah Depaoli, James~P. Clifton, and Patrice~R. Cobb.
\newblock {J}ust {A}nother {G}ibbs {S}ampler {(JAGS)}: Flexible software for
  {MCMC} implementation.
\newblock \emph{Journal of Educational and Behavioral Statistics}, 41\penalty0
  (6):\penalty0 628--649, 2016.

\bibitem[Eddelbuettel(2013)]{Eddelbuettel2013}
Dirk Eddelbuettel.
\newblock \emph{Seamless \proglang{R} and \proglang{C++} Integration with \pkg{Rcpp}}.
\newblock Springer, 2013.

\bibitem[Eddelbuettel and Sanderson(2014)]{EddelbuettelSanderson2014}
Dirk Eddelbuettel and Conrad Sanderson.
\newblock \pkg{RcppArmadillo}: Accelerating \proglang{R} with high-performance
  \proglang{C++} linear algebra.
\newblock \emph{Computational Statistics and Data Analysis}, 71:\penalty0
  1054--1063, 2014.

\bibitem[Eicher and Brewer(2001)]{EicherBrewer2001}
Cory~L. Eicher and Cynthia~A. Brewer.
\newblock Dasymetric mapping and areal interpolation: Implementation and
  evaluation.
\newblock \emph{Cartography and Geographic Information Science}, 28\penalty0
  (2):\penalty0 125--138, 2001.

\bibitem[ESRI(1998)]{ESRI1998}
ESRI.
\newblock \emph{ESRI Shapefile Technical Description}, July 1998.
\newblock URL
  \url{https://www.esri.com/library/whitepapers/pdfs/shapefile.pdf}.

\bibitem[Fuentes et~al.(2006)Fuentes, Song, Ghosh, Holland, and
  Davis]{FuentesEtAl2006}
Montserrat Fuentes, Hae-Ryoung Song, Sujit~K Ghosh, David~M Holland, and
  Jerry~M Davis.
\newblock Spatial association between speciated fine particles and mortality.
\newblock \emph{Biometrics}, 62\penalty0 (3):\penalty0 855--863, 2006.

\bibitem[Gotway and Young(2002)]{GotwayYoung2002}
Carol~A. Gotway and Linda~J. Young.
\newblock Combining incompatible spatial data.
\newblock \emph{Journal of the American Statistical Association}, 97\penalty0
  (458):\penalty0 632--648, 2002.

\bibitem[Higham(1988)]{Higham1988}
Nicholas~J. Higham.
\newblock Computing a nearest symmetric positive semidefinite matrix.
\newblock \emph{Linear Algebra and its Applications}, 103:\penalty0 103--118,
  1988.

\bibitem[Lam(1983)]{Lam1983}
Nina Siu-Ngan Lam.
\newblock Spatial interpolation methods: A review.
\newblock \emph{The American Cartographer}, 10\penalty0 (2):\penalty0 129--150,
  1983.

\bibitem[Lunn et~al.(2009)Lunn, Spiegelhalter, Thomas, and Best]{LunnEtAl2009}
David Lunn, David Spiegelhalter, Andrew Thomas, and Nicky Best.
\newblock The {BUGS} project: Evolution, critique and future directions.
\newblock \emph{Statistics in Medicine}, 28\penalty0 (25):\penalty0 3049--3067,
  2009.

\bibitem[Mileu and Queir\'{o}s(2018)]{MileuQueiros2018}
Nelson Mileu and Margarida Queir\'{o}s.
\newblock Development of a {QGIS} plugin to dasymetric mapping.
\newblock \emph{Free and Open Source Software for Geospatial (FOSS4G)
  Conference Proceedings}, 18\penalty0 (9), 2018.

\bibitem[{National Academy of Sciences}(2015)]{NAS2015}
{National Academy of Sciences}.
\newblock \emph{Realizing the potential of the American Community Survey:
  Challenges, tradeoffs, and opportunities}.
\newblock National Academies Press, 2015.

\bibitem[Nguyen et~al.(2012)Nguyen, Cressie, and Braverman]{NguyenEtAl2012}
Hai Nguyen, Noel Cressie, and Amy Braverman.
\newblock Spatial statistical data fusion for remote sensing applications.
\newblock \emph{Journal of the American Statistical Association}, 107\penalty0
  (499):\penalty0 1004--1018, 2012.

\bibitem[Nychka and Saltzman(1998)]{NychkaSaltzman1998}
Douglas Nychka and Nancy Saltzman.
\newblock \emph{Design of air quality monitoring networks}, pages 51--76.
\newblock Lecture Notes in Statistics. Springer-Verlag, 1998.

\bibitem[Nychka et~al.(2017)Nychka, Furrer, Paige, and Sain]{NychkaEtAl2017}
Douglas Nychka, Reinhard Furrer, John Paige, and Stephan Sain.
\newblock \emph{\pkg{fields}: Tools for spatial data}.
\newblock University Corporation for Atmospheric Research, Boulder, CO, USA,
  2017.
\newblock URL \url{https://github.com/NCAR/Fields}.
\newblock \proglang{R} package version 10.3.

\bibitem[Ooms(2014)]{Ooms2014}
Jeroen Ooms.
\newblock The \pkg{jsonlite} package: A practical and consistent mapping
  between \code{JSON} data and \proglang{R} objects.
\newblock \emph{arXiv:1403.2805 [stat.CO]}, 2014.
\newblock URL \url{https://arxiv.org/abs/1403.2805}.

\bibitem[Pebesma(2018)]{Pebesma2018}
Edzer Pebesma.
\newblock {Simple Features for \pkg{R}: Standardized Support for Spatial Vector
  Data}.
\newblock \emph{{The R Journal}}, 10\penalty0 (1):\penalty0 439--446, 2018.

\bibitem[Plummer et~al.(2006)Plummer, Best, Cowles, and Vines]{PlummerEtAl2006}
Martyn Plummer, Nicky Best, Kate Cowles, and Karen Vines.
\newblock \pkg{CODA}: Convergence diagnosis and output analysis for {MCMC}.
\newblock \emph{\proglang{R} News}, 6\penalty0 (1):\penalty0 7--11, 2006.

\bibitem[Prener and Revord(2019)]{PrenerRevord2019}
Christopher~G. Prener and Charles~K. Revord.
\newblock areal: An {R} package for areal weighted interpolation.
\newblock \emph{Journal of Open Source Software}, 4\penalty0 (37), 2019.
\newblock \doi{https://dx.doi.org/10.21105/joss.01221}.

\bibitem[Qiu and Mei(2019)]{RSpectra2019}
Yixuan Qiu and Jiali Mei.
\newblock \emph{{RSpectra}: Solvers for Large-Scale Eigenvalue and {SVD}
  Problems}, 2019.
\newblock URL \url{https://CRAN.R-project.org/package=RSpectra}.
\newblock R package version 0.16-0.

\bibitem[Qiu et~al.(2012)Qiu, Zhang, and Zhou]{QiuEtAl2012}
Fang Qiu, Caiyun Zhang, and Yuhong Zhou.
\newblock The development of an areal interpolation {ArcGIS} extension and a
  comparative study.
\newblock \emph{GIScience \& Remote Sensing}, 49\penalty0 (5):\penalty0
  644--663, 2012.

\bibitem[{R Core Team}(2020)]{Rcore2020}
{R Core Team}.
\newblock \emph{\pkg{R}: A Language and Environment for Statistical Computing}.
\newblock R Foundation for Statistical Computing, Vienna, Austria, 2020.
\newblock URL \url{https://www.R-project.org/}.

\bibitem[Raim et~al.(2017)Raim, Holan, Bradley, and Wikle]{JSM2017-STCOS}
Andrew~M. Raim, Scott~H. Holan, Jonathan~R. Bradley, and Christopher~K. Wikle.
\newblock A model selection study for spatio-temporal change of support.
\newblock In \emph{JSM Proceedings, Government Statistics Section. Alexandria,
  VA: American Statistical Association}, pages 1524--1540, 2017.

\bibitem[Rode et~al.(2010)Rode, Arhonditsis, Balin, Kebede, Krysanova,
  Van~Griensven, and {v}an~{d}er Zee]{RodeEtAl2010}
Michael Rode, George Arhonditsis, Daniela Balin, Tesfaye Kebede, Valentina
  Krysanova, Ann Van~Griensven, and Sjoerd E.~A.~T.~M. {v}an~{d}er Zee.
\newblock New challenges in integrated water quality modelling.
\newblock \emph{Hydrological processes}, 24\penalty0 (24):\penalty0 3447--3461,
  2010.

\bibitem[Spiegelhalter et~al.(2002)Spiegelhalter, Best, Carlin, and
  Van~Der~Linde]{SpiegelhalterEtAl2002}
David~J. Spiegelhalter, Nicola~G. Best, Bradley~P. Carlin, and Angelika
  Van~Der~Linde.
\newblock Bayesian measures of model complexity and fit.
\newblock \emph{Journal of the Royal Statistical Society: Series B (Statistical
  Methodology)}, 64\penalty0 (4):\penalty0 583--639, 2002.

\bibitem[{Stan Development Team}(2020)]{rstan2020}
{Stan Development Team}.
\newblock {RStan}: the {R} interface to {Stan}, 2020.
\newblock URL \url{http://mc-stan.org/}.
\newblock R package version 2.19.3.

\bibitem[Tobler(1979)]{Tobler1979}
Waldo~R. Tobler.
\newblock Smooth pycnophylactic interpolation for geographical regions.
\newblock \emph{Journal of the American Statistical Association}, 74\penalty0
  (367):\penalty0 519--530, 1979.

\bibitem[{U.S. Census Bureau}(2016)]{ACSSuppression}
{U.S. Census Bureau}.
\newblock {A}merican {C}ommunity {S}urvey data suppression, October 2016.
\newblock URL
  \url{https://www.census.gov/programs-surveys/acs/technical-documentation/data-suppression.html}.
\newblock Accessed on Sept 2, 2019.

\bibitem[{U.S. Census Bureau}(2018)]{ACSUsage2018}
{U.S. Census Bureau}.
\newblock Understanding and using {A}merican {C}ommunity {S}urvey data: What
  all data users need to know, July 2018.
\newblock URL
  \url{https://www.census.gov/programs-surveys/acs/guidance/handbooks/general.html}.
\newblock Accessed on Sept 2, 2019.

\bibitem[Walker(2018)]{Walker2018}
Kyle Walker.
\newblock \emph{\pkg{tigris}: Load {C}ensus {TIGER/Line} Shapefiles}, 2018.
\newblock URL \url{https://CRAN.R-project.org/package=tigris}.
\newblock \proglang{R} package version 0.7.

\bibitem[Waller and Gotway(2004)]{WallerGotway2004}
Lance~A. Waller and Carol~A. Gotway.
\newblock \emph{Applied Spatial Statistics for Public Health Data}.
\newblock Wiley-Interscience, 2004.

\bibitem[Weinberg et~al.(2018)Weinberg, Abowd, Belli, Cressie, Folch, Holan,
  Levenstein, Olson, Reiter, Shapiro, Smyth, Soh, Spencer, Spielman, Vilhuber,
  and Wikle]{NCRNPub2019}
Daniel~H. Weinberg, John~M. Abowd, Robert~F. Belli, Noel Cressie, David~C.
  Folch, Scott~H. Holan, Margaret~C. Levenstein, Kristen~M. Olson, Jerome~P.
  Reiter, Matthew~D. Shapiro, Jolene~D. Smyth, Leen-Kiat Soh, Bruce~D. Spencer,
  Seth~E. Spielman, Lars Vilhuber, and Christopher~K. Wikle.
\newblock Effects of a government-academic partnership: Has the {NSF}-{C}ensus
  {B}ureau {R}esearch {N}etwork helped improve the {US} statistical system?
\newblock \emph{Journal of Survey Statistics and Methodology}, 2018.
\newblock \doi{https://dx.doi.org/10.1093/jssam/smy023}.

\bibitem[Wickham(2016)]{Wickham2016}
Hadley Wickham.
\newblock \emph{\pkg{ggplot2}: Elegant Graphics for Data Analysis}.
\newblock Springer-Verlag New York, 2016.

\bibitem[Wickham et~al.(2020)Wickham, Fran\c{c}ois, Henry, and
  M\"{u}ller]{dplyr2020}
Hadley Wickham, Romain Fran\c{c}ois, Lionel Henry, and Kirill M\"{u}ller.
\newblock \emph{\pkg{dplyr}: A Grammar of Data Manipulation}, 2020.
\newblock URL \url{https://CRAN.R-project.org/package=dplyr}.
\newblock \proglang{R} package version 0.8.5.

\bibitem[Wikle and Berliner(2005)]{WikleBerliner2005}
Christopher~K. Wikle and L.~Mark Berliner.
\newblock Combining information across spatial scales.
\newblock \emph{Technometrics}, 47\penalty0 (1):\penalty0 80--91, 2005.

\end{thebibliography}

\clearpage

\begin{figure}
\centering
\includegraphics[width=0.5\textwidth]{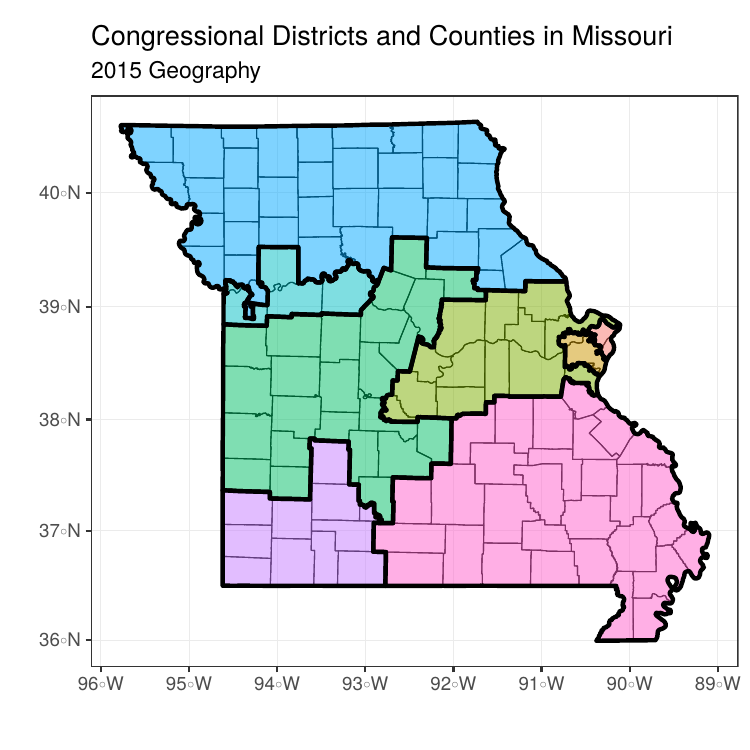}
\caption{The state of Missouri in 2015. Thin lines mark boundaries between the 114 counties and one independent city. Shaded areas with thick lines mark the eight congressional districts.}
\label{fig:county-vs-cd}
\end{figure}

\begin{figure}
\centering
\includegraphics[width=0.48\textwidth]{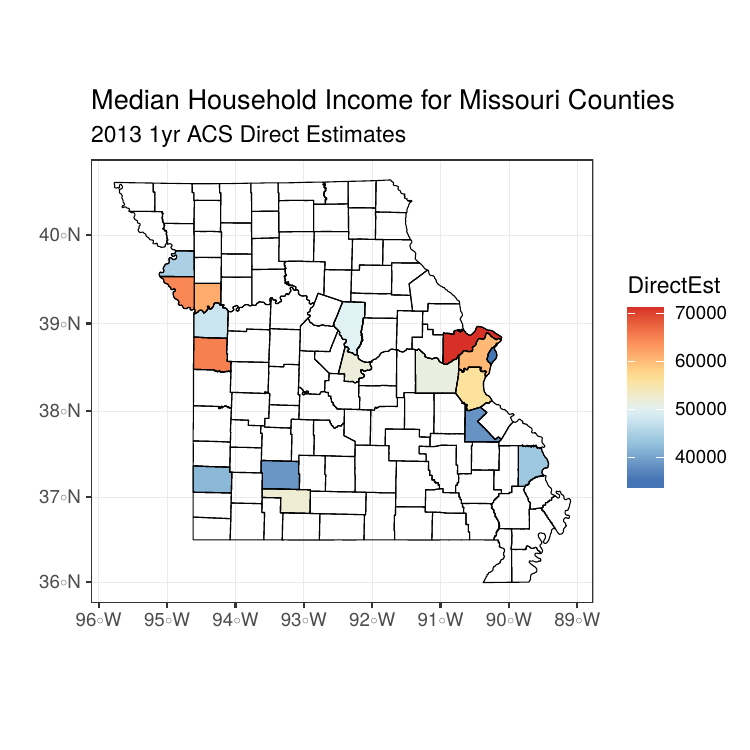}
\includegraphics[width=0.48\textwidth]{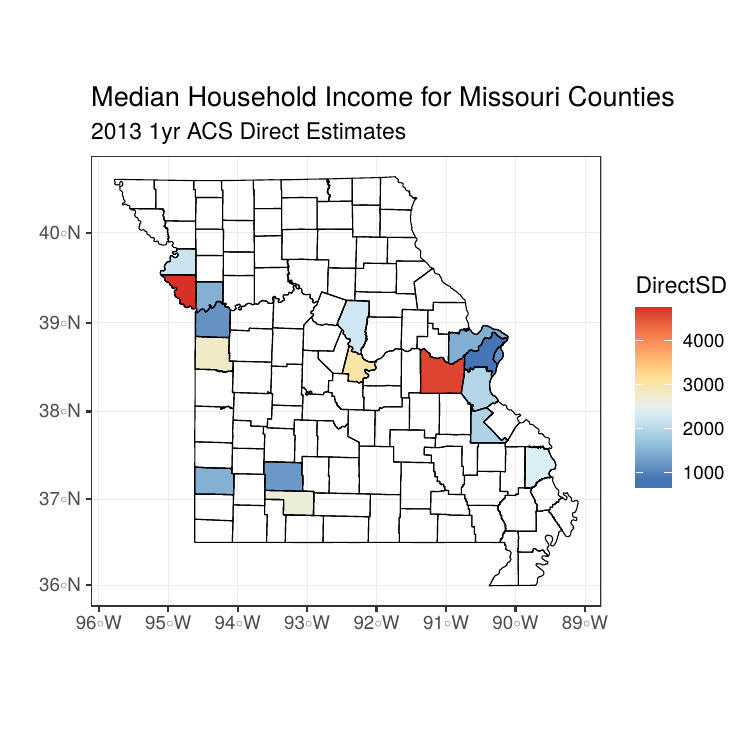} \vspace{-3em} \\
\includegraphics[width=0.48\textwidth]{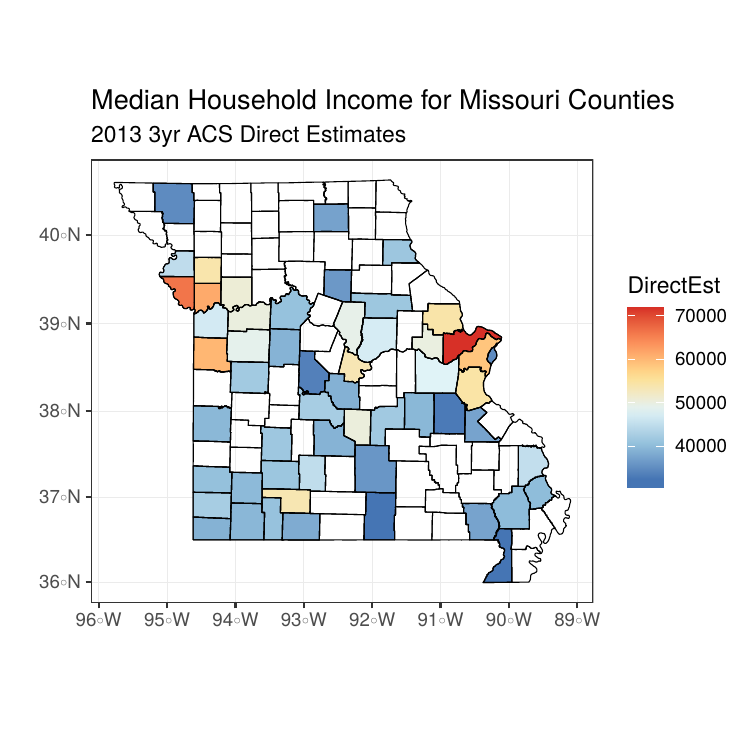}
\includegraphics[width=0.48\textwidth]{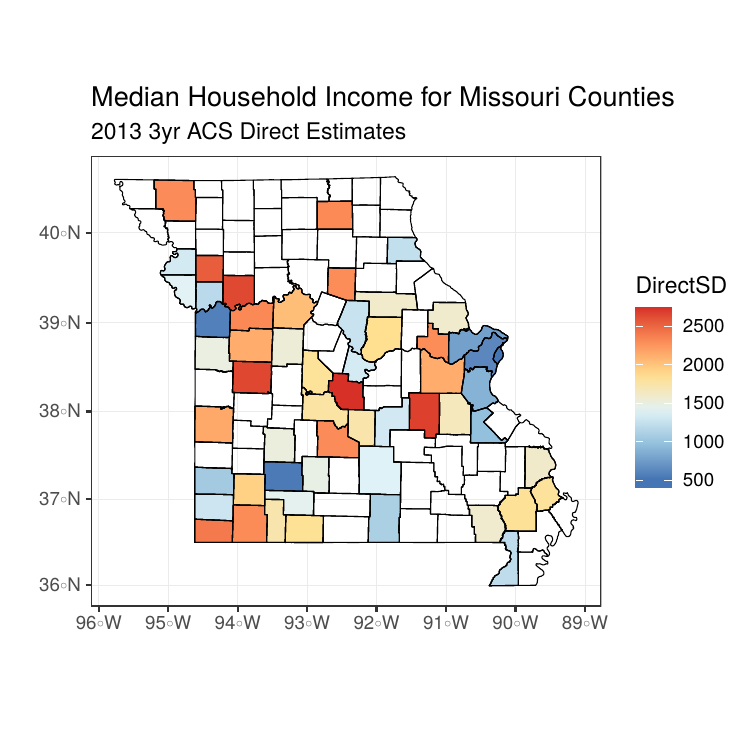} \vspace{-3em} \\
\includegraphics[width=0.48\textwidth]{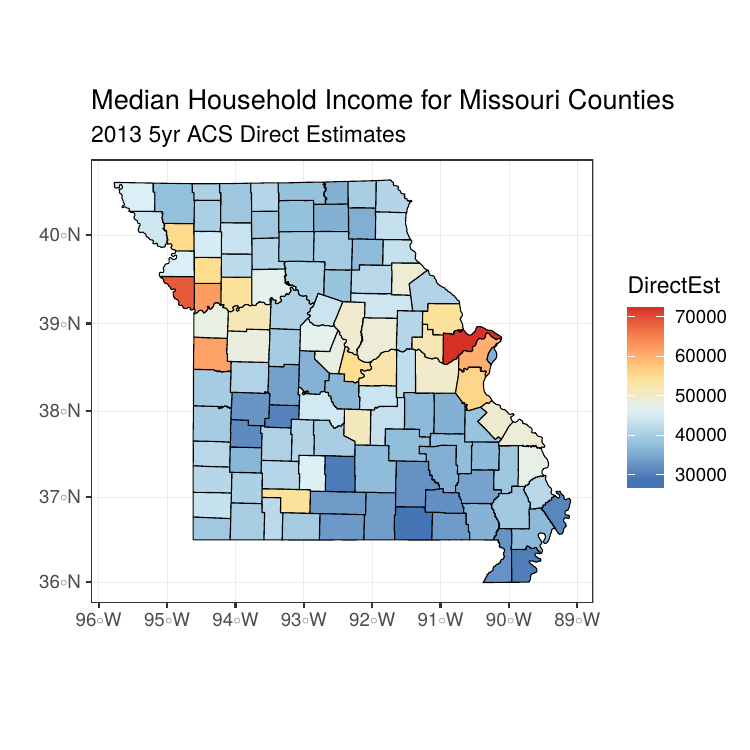}
\includegraphics[width=0.48\textwidth]{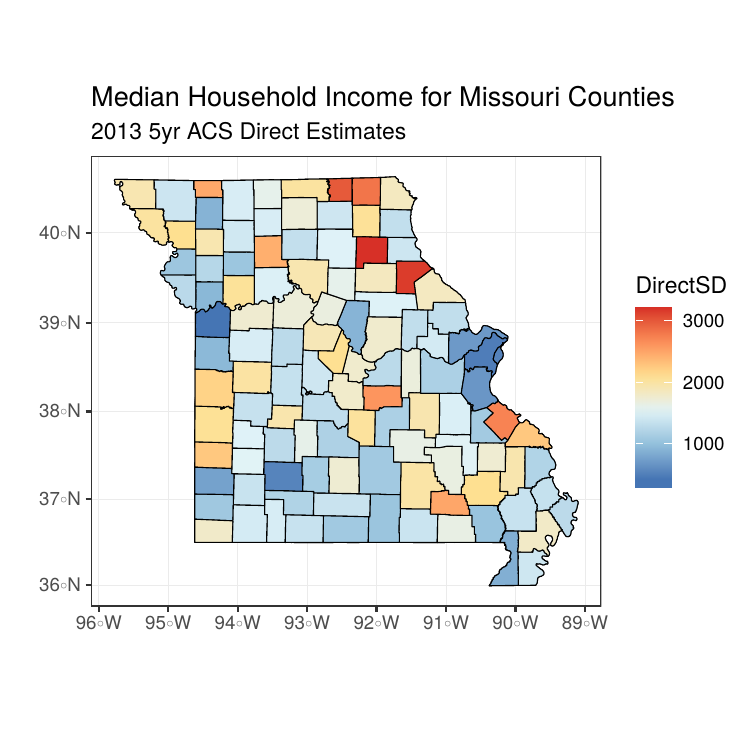}
\vspace{-3em}
\caption{County-level ACS data for median household income in Missouri for the year 2013. The left column shows direct estimates and the right column displays standard errors. The first, second, and third rows correspond to 1-year, 3-year, and 5-year period estimates, respectively. Public ACS estimates were not available for areas with white shading.}
\label{fig:acs-maps}
\end{figure}

\begin{figure}
\centering
\begin{subfigure}{0.35\textwidth}
\centering
\includegraphics[width=\textwidth]{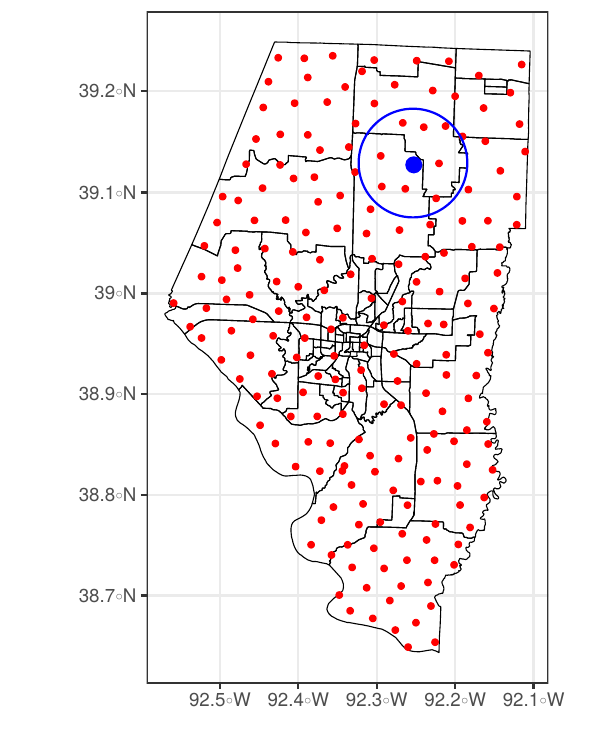}
\caption{Space-filling design.}
\label{fig:spatial-knots-cover}
\end{subfigure}
\begin{subfigure}{0.35\textwidth}
\centering
\includegraphics[width=\textwidth]{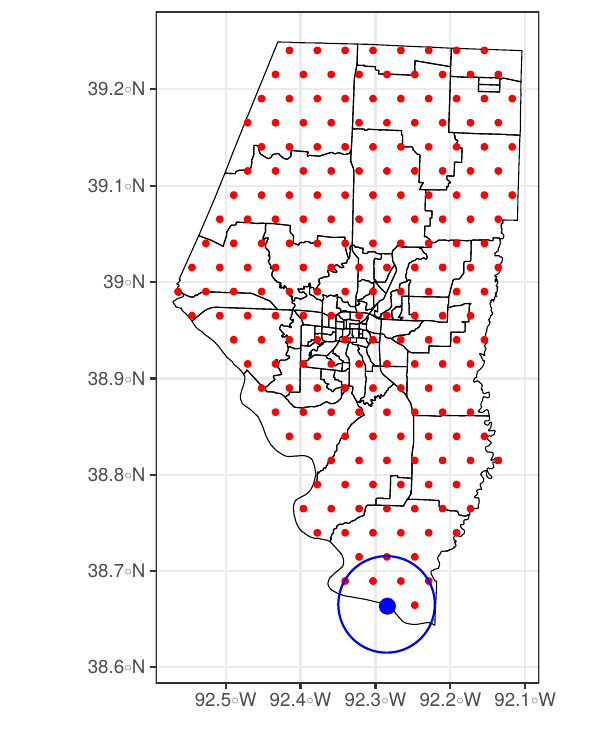}
\caption{Hexagonal sampling.}
\label{fig:spatial-knots-hex}
\end{subfigure}
\caption{Spatial knot points selected for spatio-temporal basis function. Red dots indicate knot points and blue circles display the spatial radius at one particular knot point. Figure~(\subref{fig:spatial-knots-cover}) shows the space-filling design whose radius was $w_s = \text{7,712.70}$, using the quantile calculation and taking $\tilde{w}_s = 1$. Figure~(\subref{fig:spatial-knots-hex}) shows hexagonal sampling, whose radius was $w_s = \text{7,169.13}$ using the same quantile calculation and choice of $\tilde{w}_s$. Note that units of $w_s$ are meters in the selected coordinate system.}
\label{fig:spatial-knots}
\end{figure}

\begin{figure}
\centering
\includegraphics[width=0.5\textwidth]{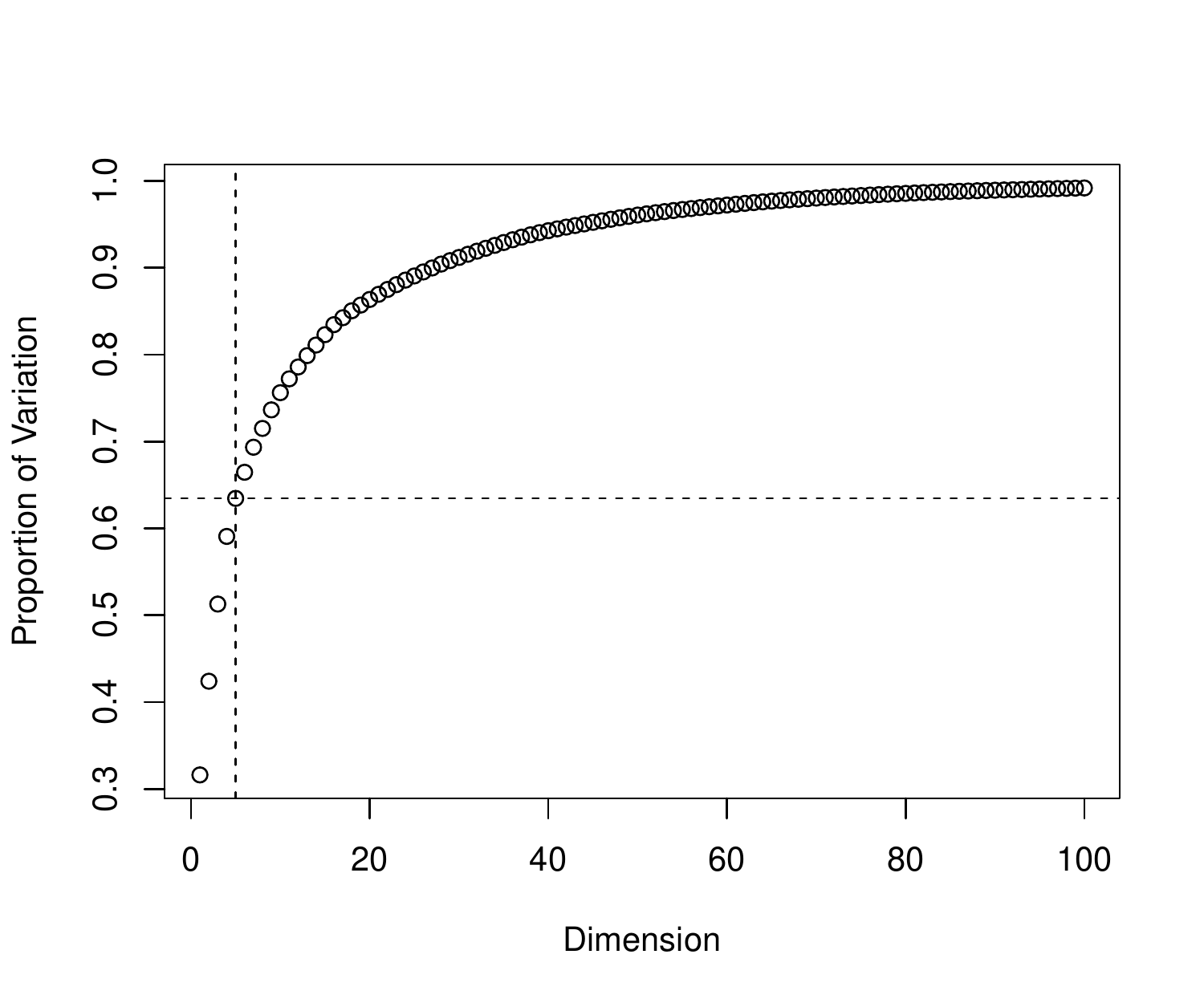}
\caption{Plot of the eigenvalues of $\vec{S}^\top \vec{S}$. The vertical line shows that 5 eigenvectors are needed to capture 65\% of the variation. The y-axis has been truncated to maintain visibility for small dimensions; the total number of eigenvalues is 7,500.}
\label{fig:pca-reduction}
\end{figure}

\begin{figure}
\centering
\includegraphics[width=0.32\textwidth]{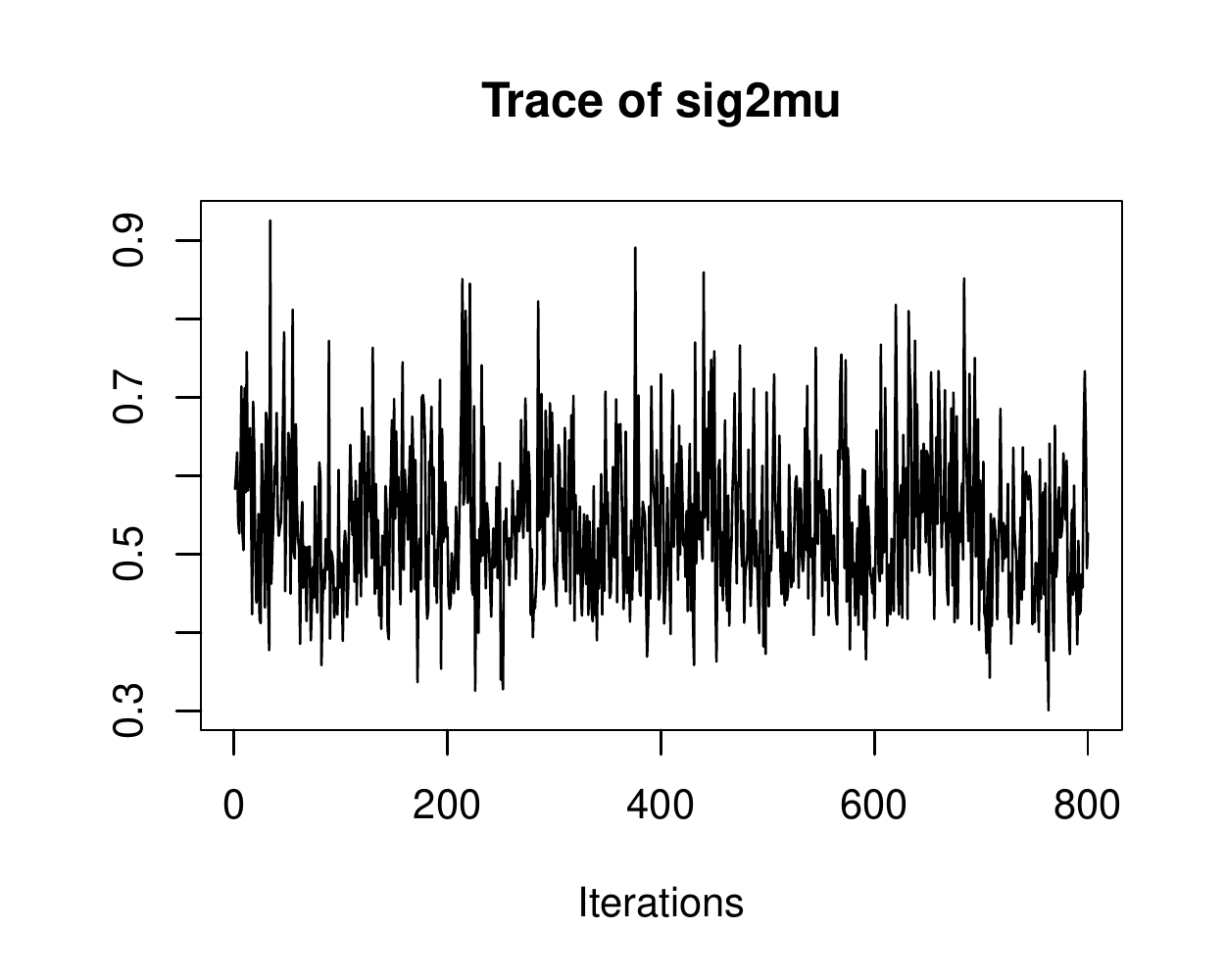}
\includegraphics[width=0.32\textwidth]{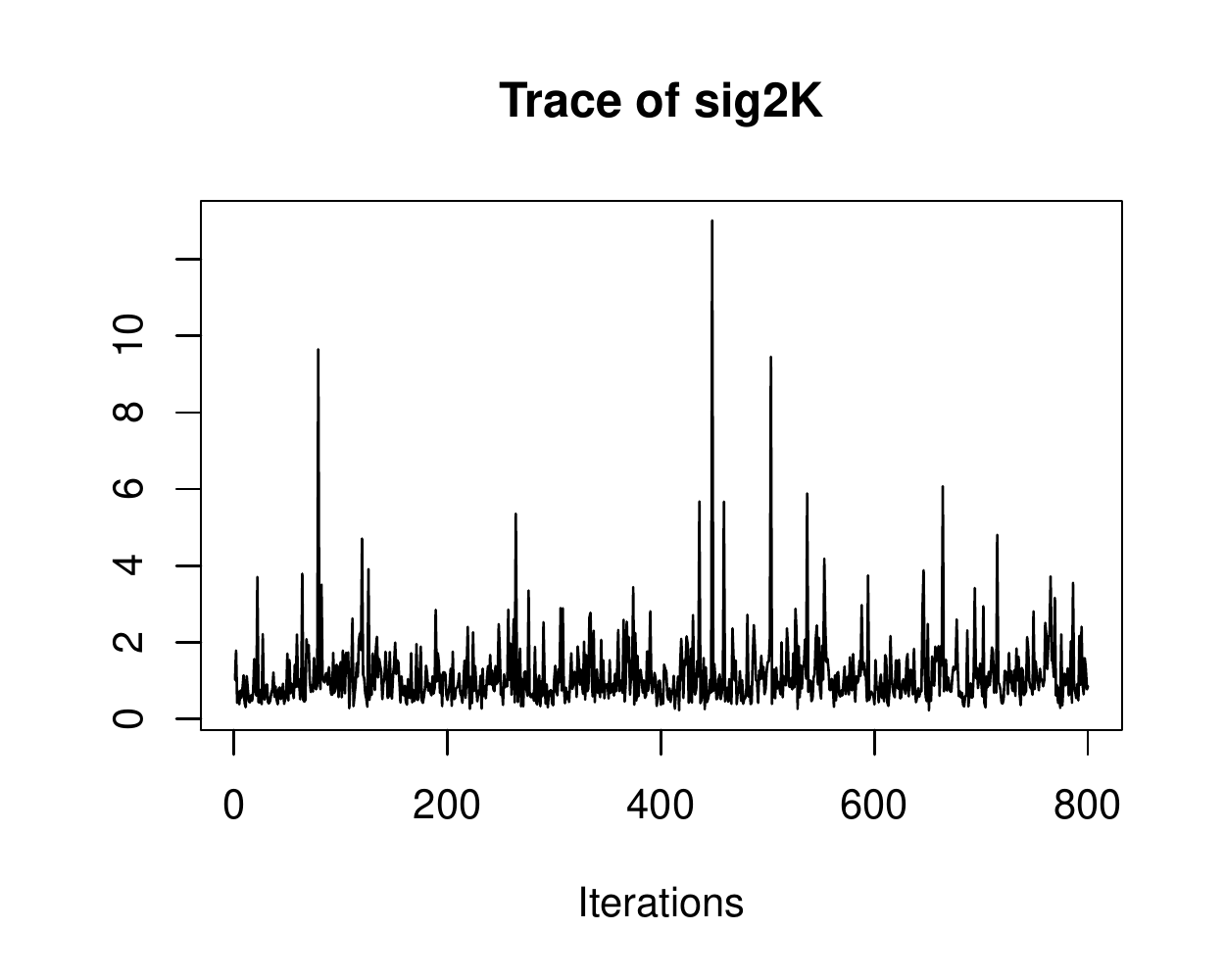}
\includegraphics[width=0.32\textwidth]{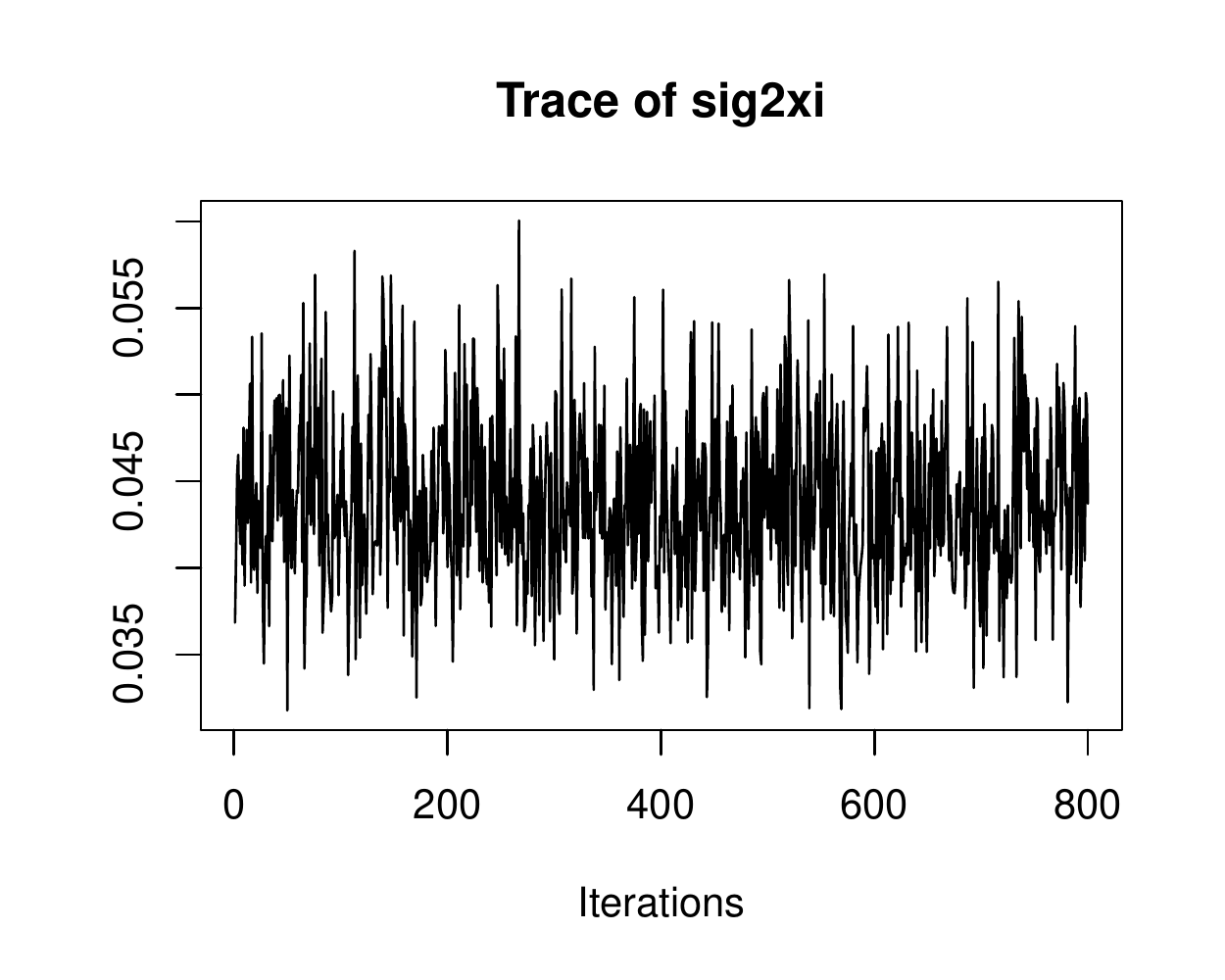}
\includegraphics[width=0.32\textwidth]{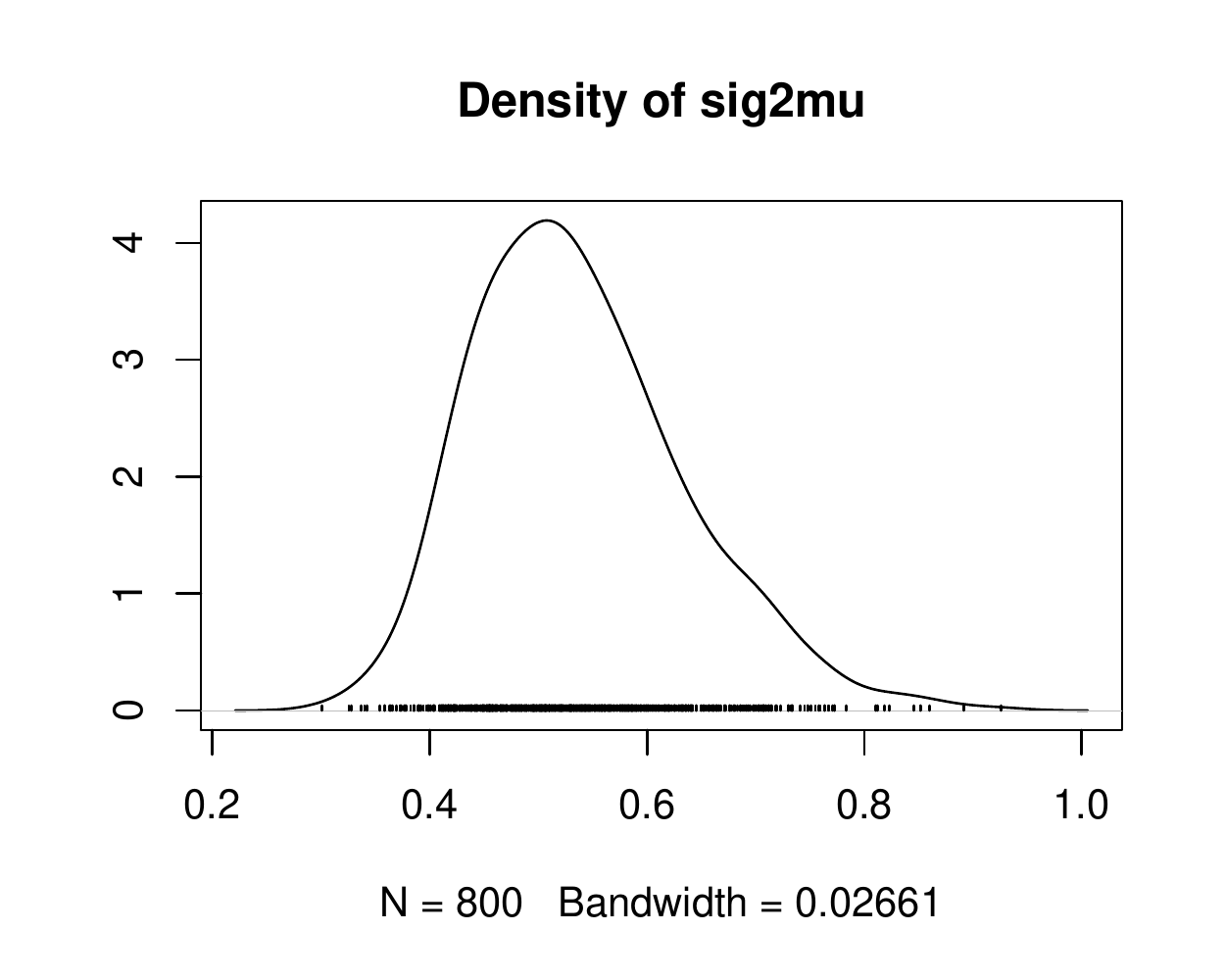}
\includegraphics[width=0.32\textwidth]{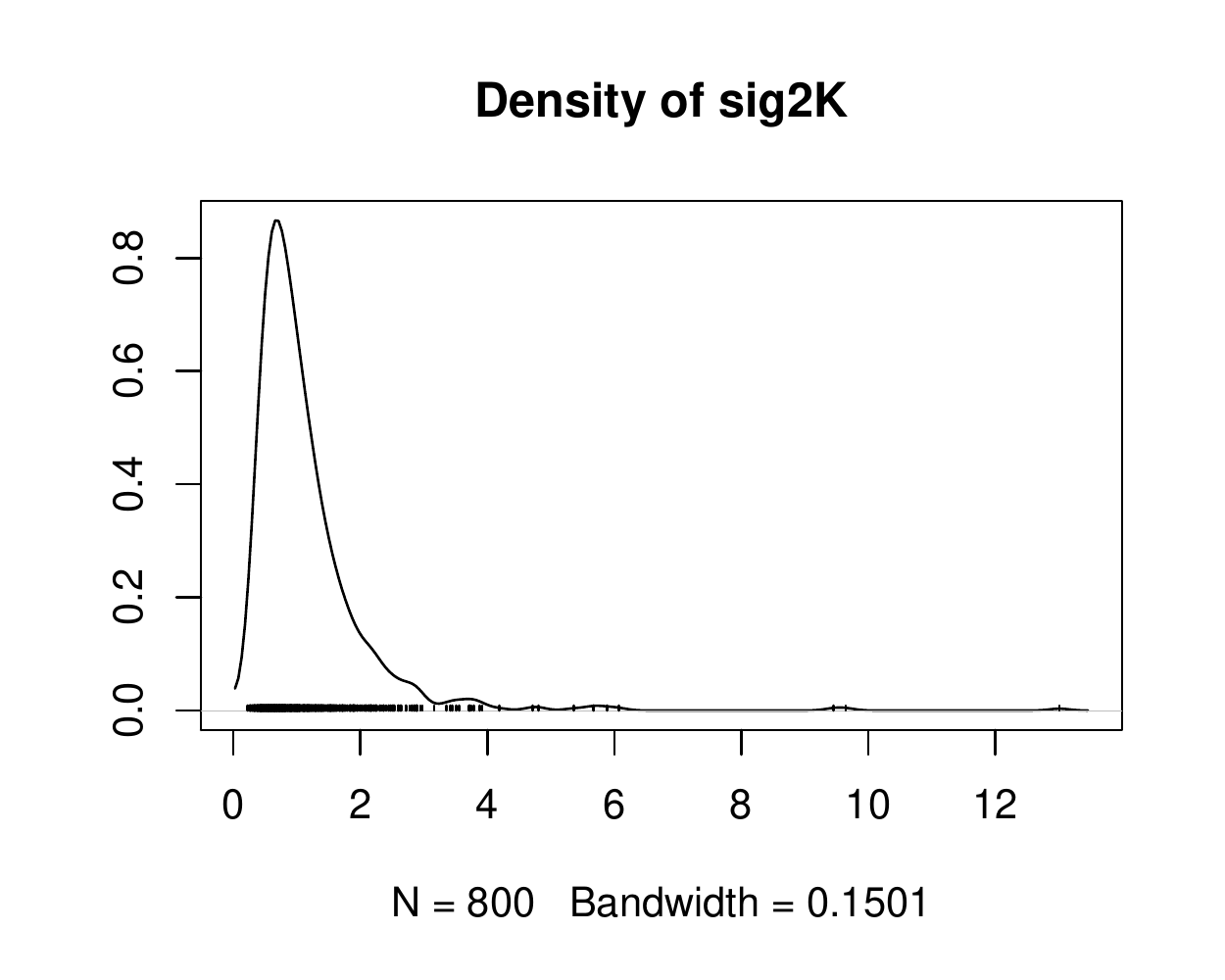}
\includegraphics[width=0.32\textwidth]{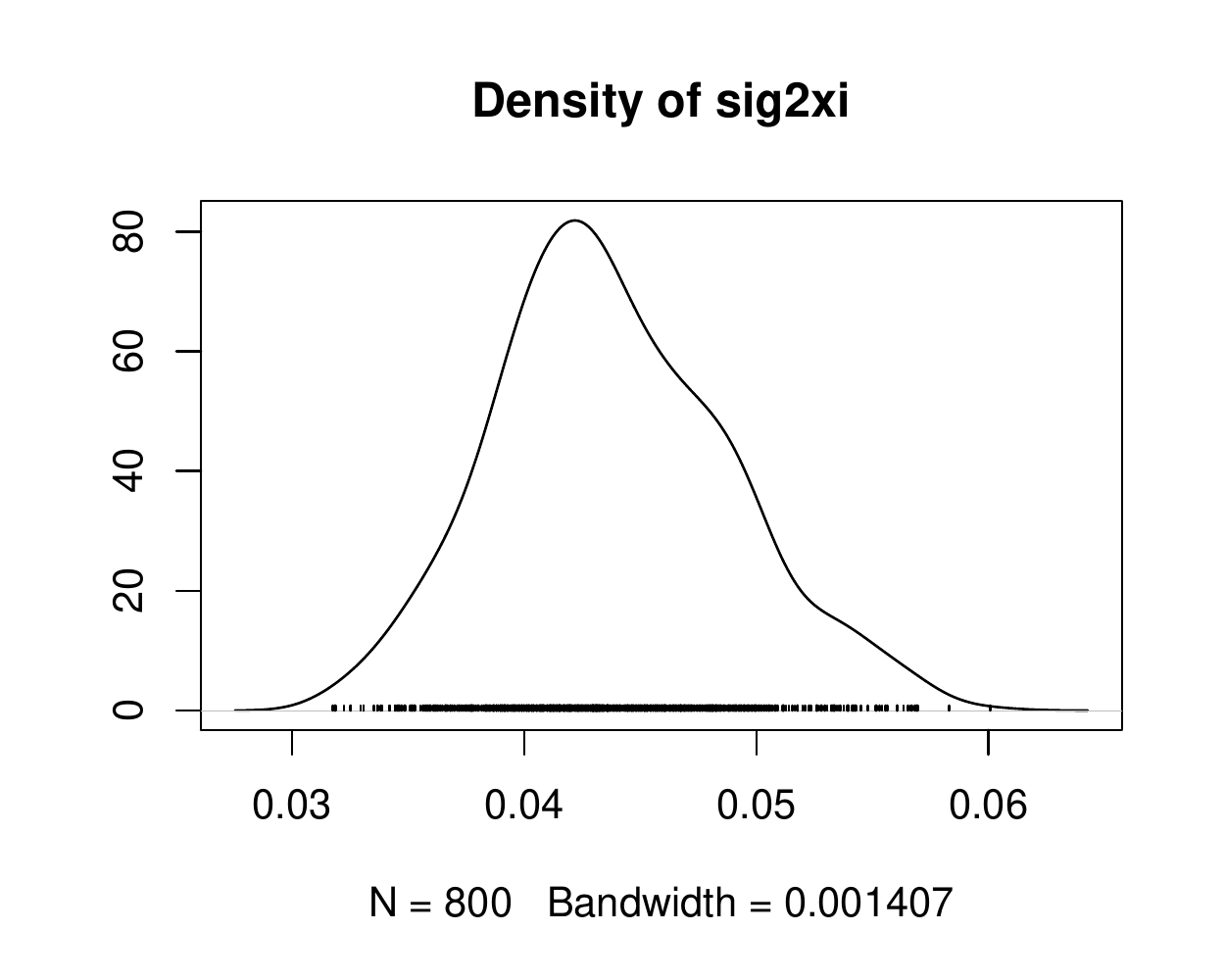}
\caption{Trace and density plots for draws of the variance components $\sigma_\mu^2$, $\sigma_K^2$, and $\sigma_\xi^2$ from the Gibbs sampler.}
\label{fig:trace-varcomps}
\end{figure}

\begin{figure}
\centering
\begin{subfigure}{0.48\textwidth}
\includegraphics[width = 1.10\textwidth]{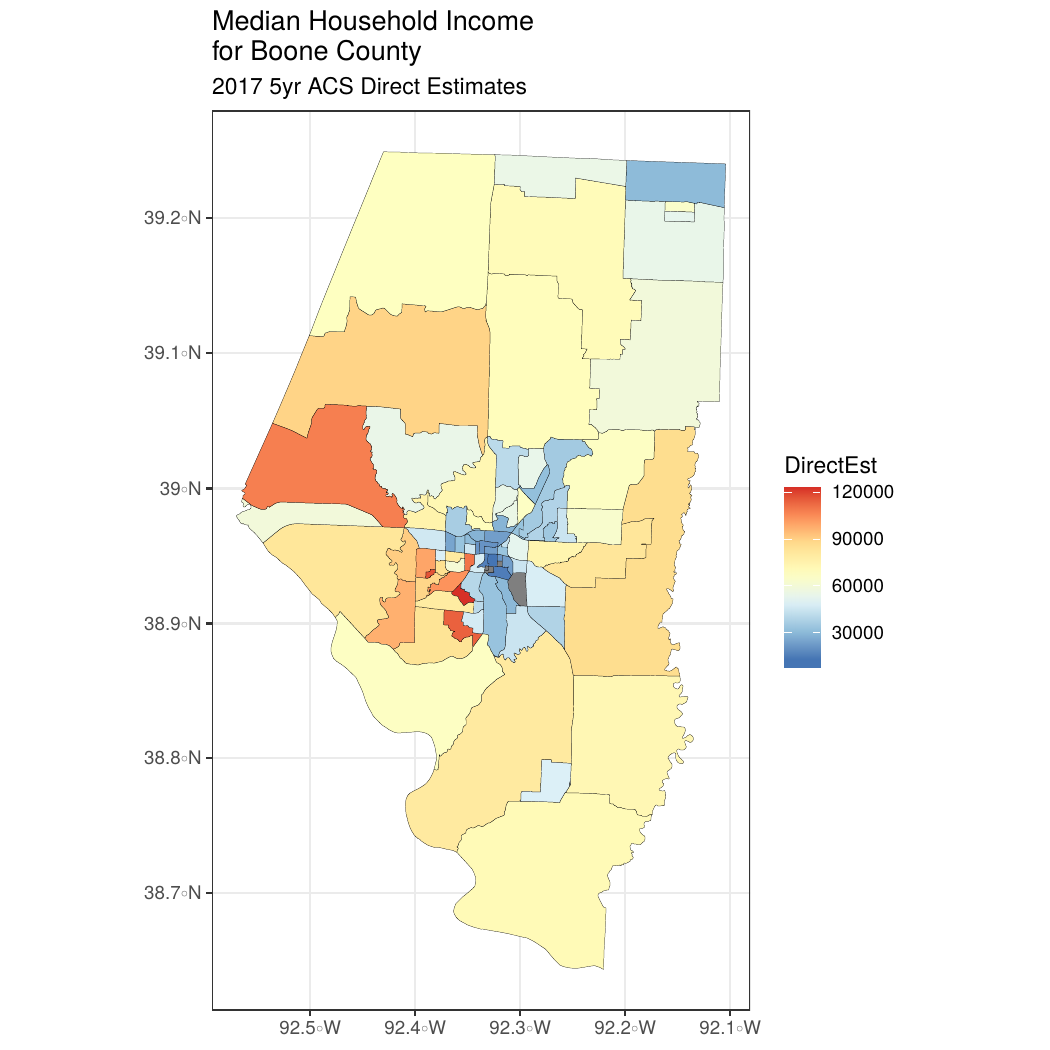}
\caption{}
\label{fig:compare2017-direct}
\end{subfigure}
\begin{subfigure}{0.48\textwidth}
\includegraphics[width = 1.10\textwidth]{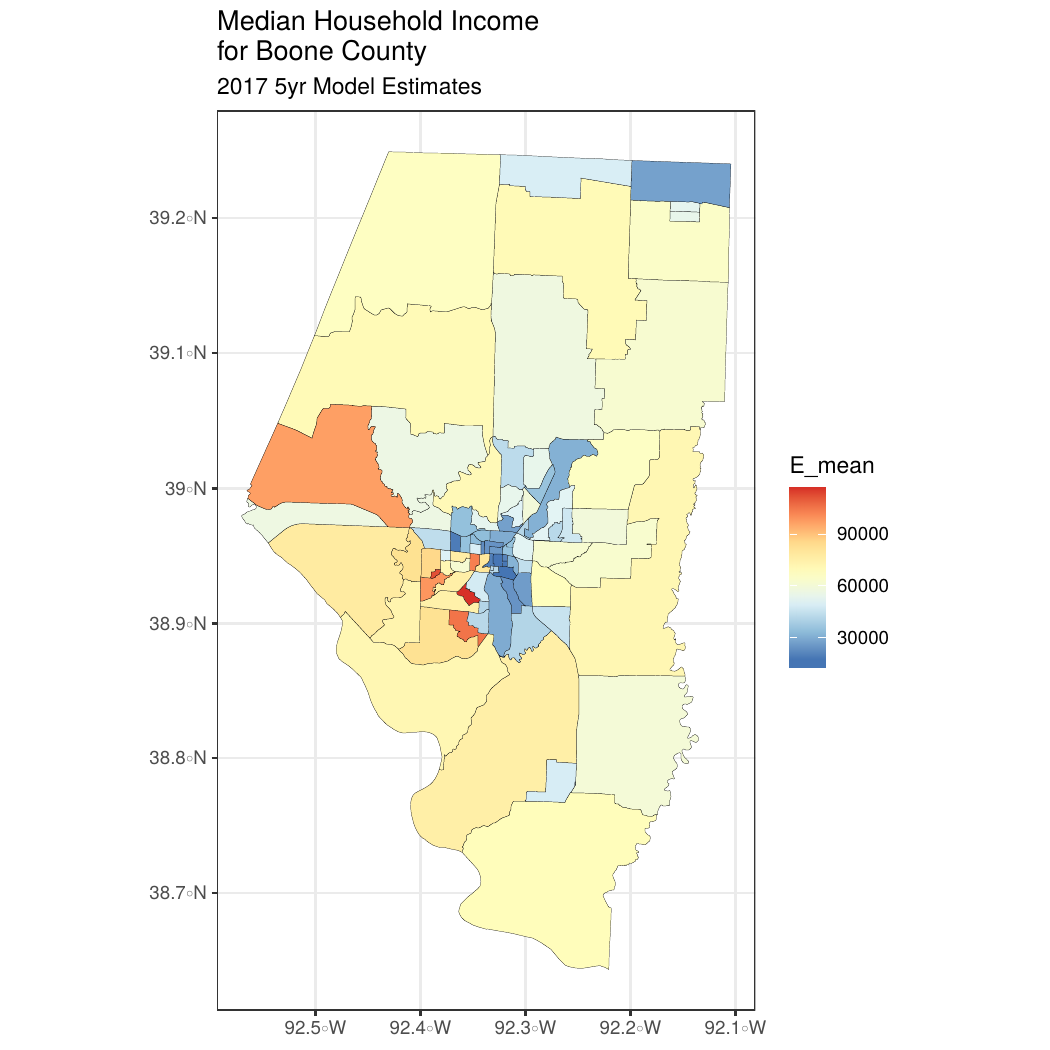}
\caption{}
\label{fig:compare2017-model}
\end{subfigure}
\begin{subfigure}{0.48\textwidth}
\centering
\includegraphics[width = 0.8\textwidth]{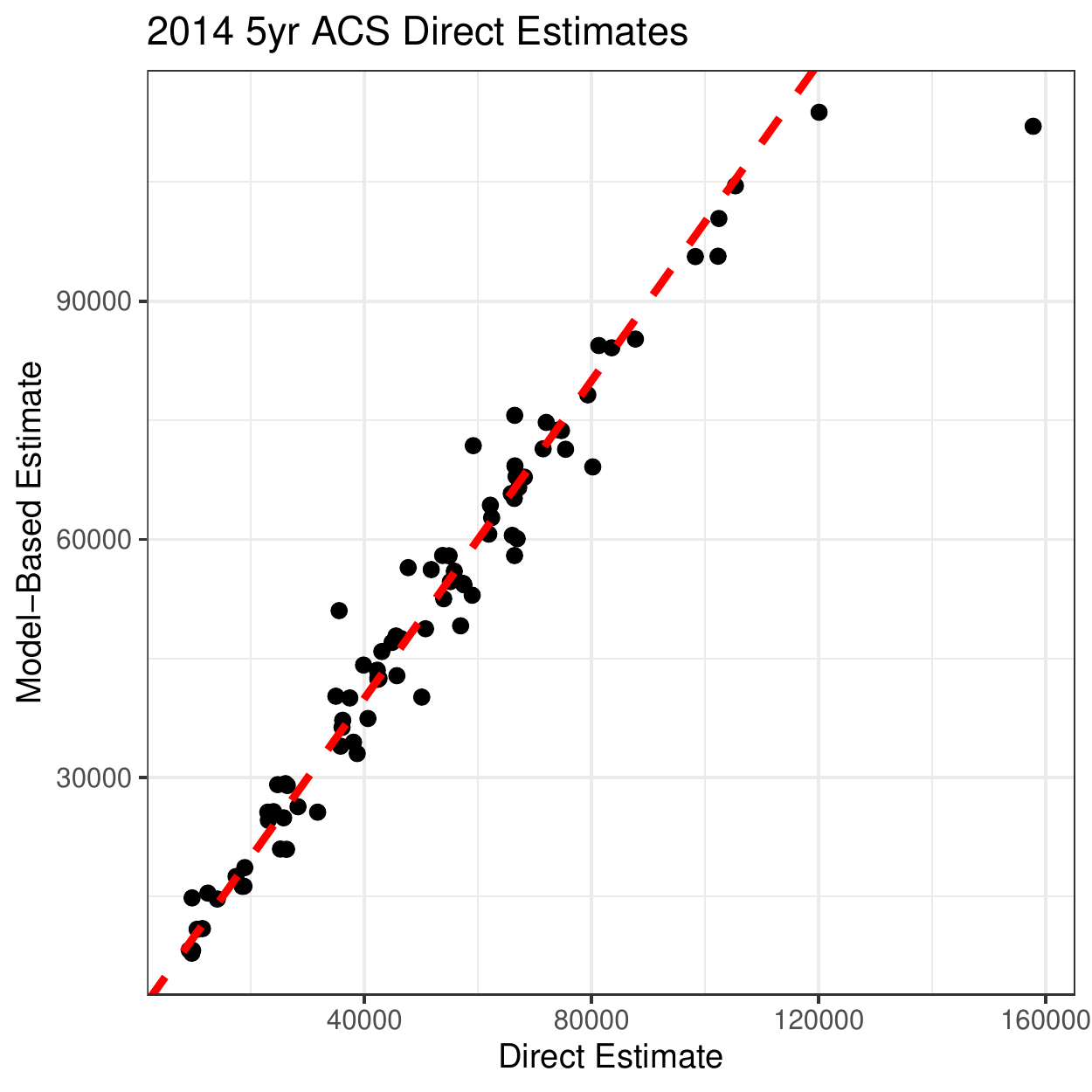}
\caption{}
\label{fig:compare2014-scatter}
\end{subfigure}
\begin{subfigure}{0.48\textwidth}
\centering
\includegraphics[width = 0.8\textwidth]{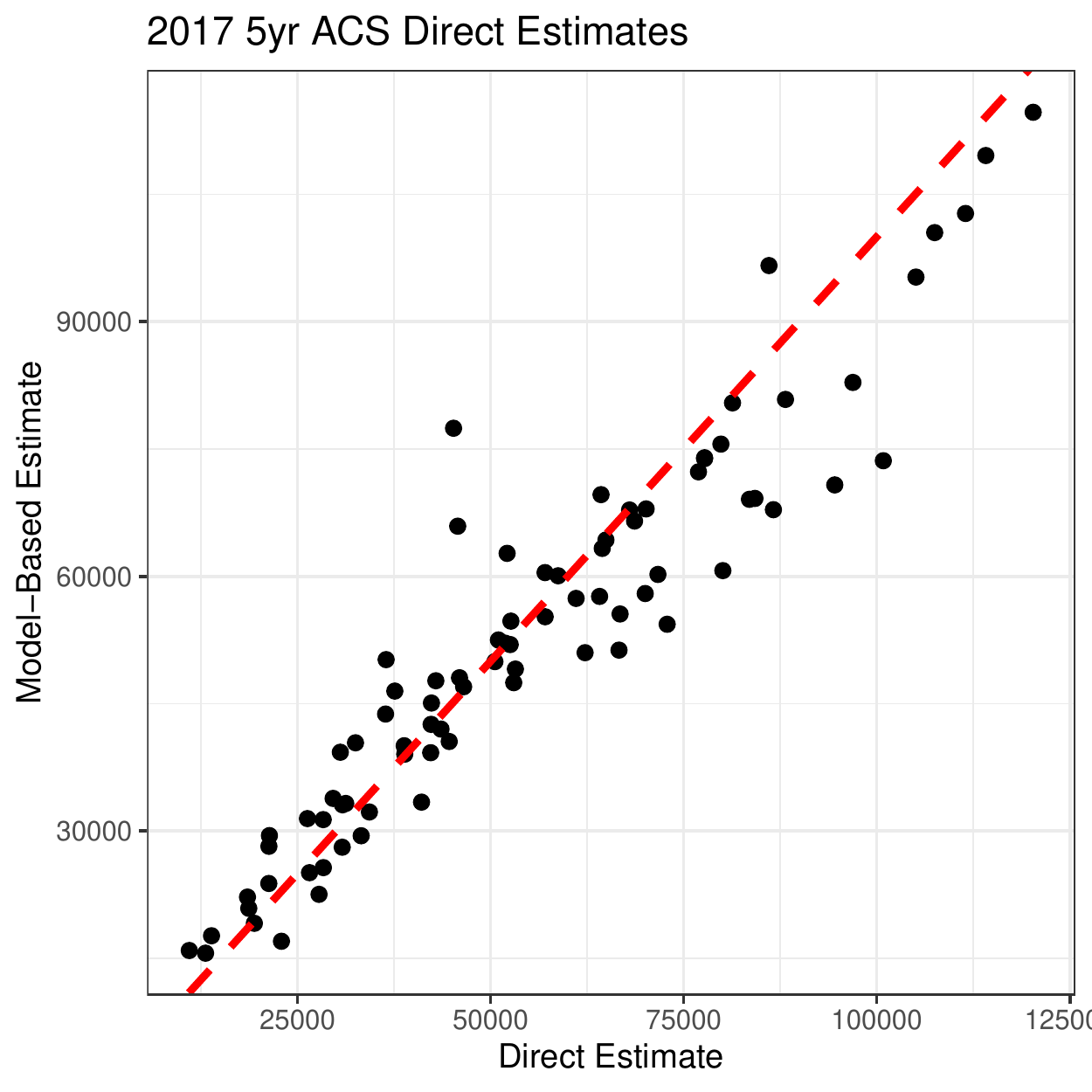}
\caption{}
\label{fig:compare2017-scatter}
\end{subfigure}
\caption{Comparison of direct and model-based ACS 5-year estimates. Figures~\subref{fig:compare2017-direct} and \subref{fig:compare2017-model} show maps based on the two estimates for year 2017. Figures~\subref{fig:compare2014-scatter} and \subref{fig:compare2017-scatter} show scatter plots comparing the two sets of estimates for years 2014 and 2017 respectively.}
\label{fig:compare2017}
\end{figure}

\begin{figure}
\centering
\begin{subtable}{0.99\textwidth}
\centering
\begin{tabular}{l|rrrrrr}
\multicolumn{1}{c|}{Region} &
\multicolumn{1}{c}{Mean} &
\multicolumn{1}{c}{SD} &
\multicolumn{1}{c}{CI Lo} &
\multicolumn{1}{c}{CI Hi} &
\multicolumn{1}{c}{Median} &
\multicolumn{1}{c}{MOE} \\
\hline
Central & 26,931.74 & 1,921.27 & 23,693.43 & 29,925.61 & 26,963.39 & 3,160.21 \\
   East & 44,199.97 & 2,449.92 & 40,217.61 & 48,043.27 & 44,256.71 & 4,029.76 \\
  North & 44,329.41 & 2,861.68 & 39,679.15 & 49,037.54 & 44,202.85 & 4,707.04 \\
  Paris & 20,822.12 & 3,636.90 & 14,965.75 & 26,665.93 & 20,772.98 & 5,982.17 \\
\hline
\end{tabular}
\caption{Estimates based on STCOS model.}
\label{fig:areas-of-interest-table}
\end{subtable}

\vspace{2em}

\begin{subfigure}{1.0\textwidth}
\centering
\includegraphics[width = 1.0\textwidth]{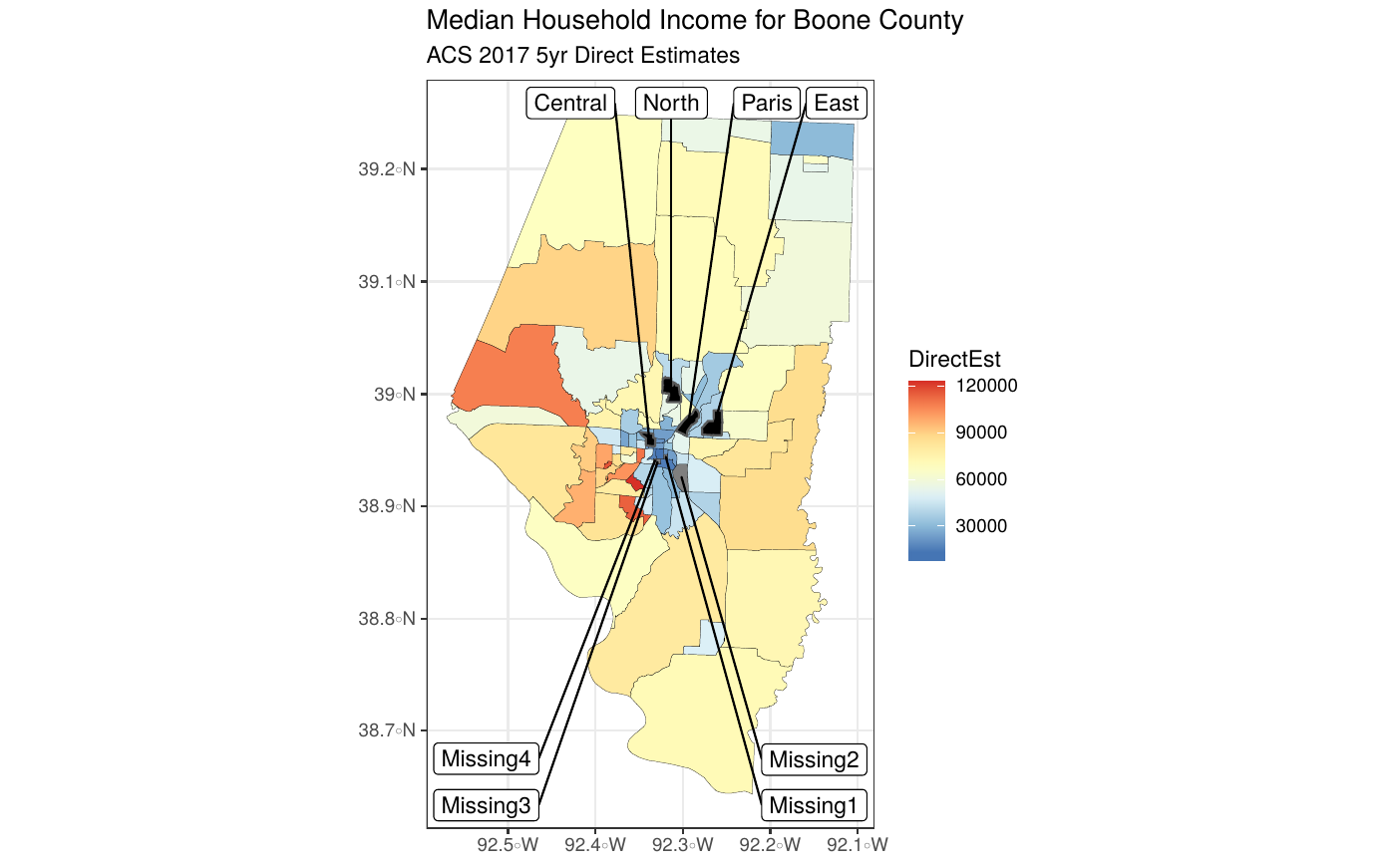}
\caption{Map of 2017 5-year direct estimates.}
\label{fig:areas-of-interest-map}
\end{subfigure}

\caption{
Model-based ACS 5-year estimates for the Central, East, North, and Paris neighborhoods in year 2017 are shown in Table~\ref{fig:areas-of-interest-table}. Figure~\ref{fig:areas-of-interest-map} shows the locations of the four neighborhoods (shaded in black), and year 2017 direct 5-year estimates in Boone County block groups for comparison. Direct estimates were not available for block groups marked as ``Missing'', which are shaded white.}
\label{fig:areas-of-interest}
\end{figure}

\begin{figure}
\centering
\begin{subfigure}{0.48\textwidth}
\centering
\includegraphics[width=\textwidth]{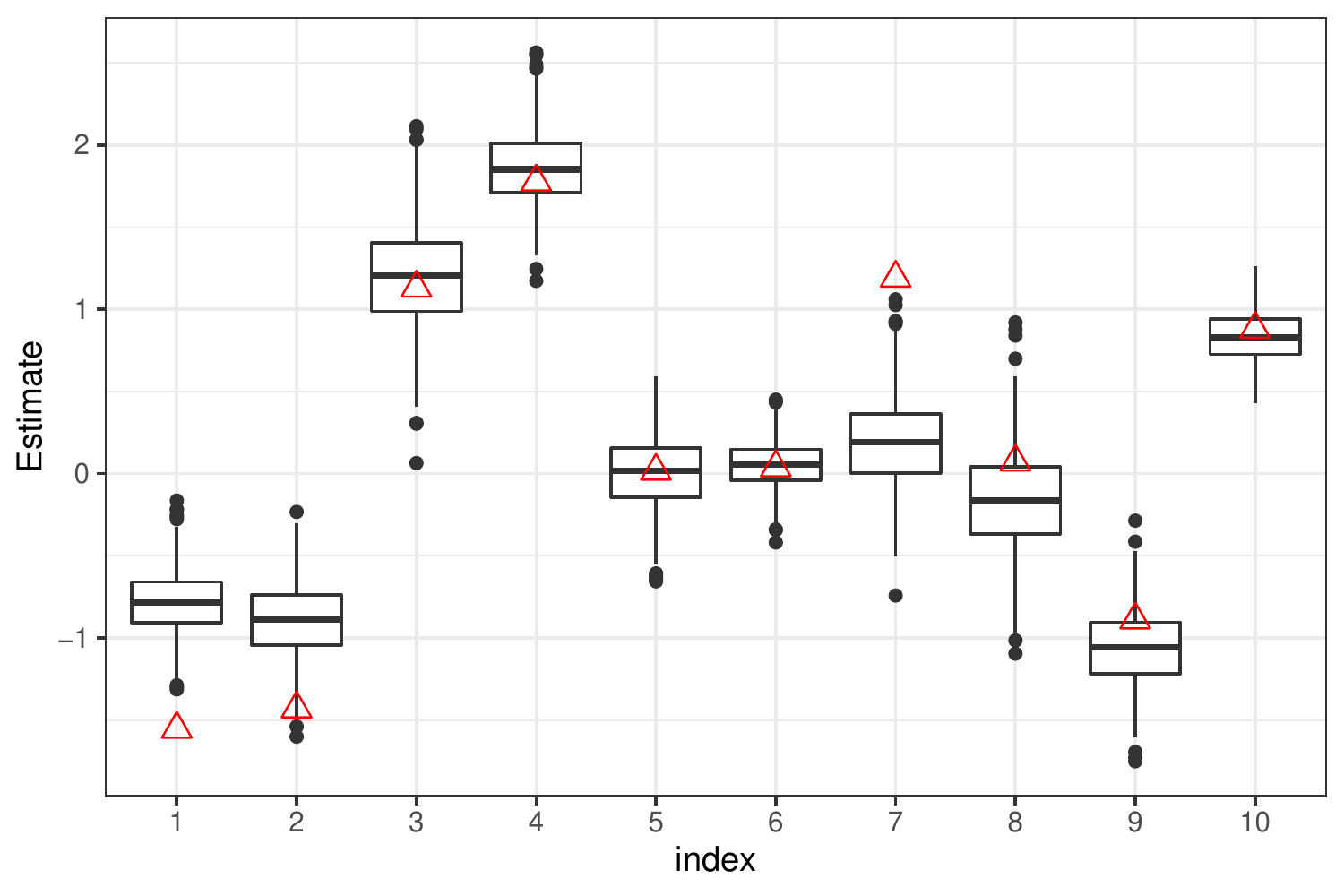}
\caption{}
\label{fig:mle_vs_gibbs_mu_part1}
\end{subfigure}
\begin{subfigure}{0.48\textwidth}
\centering
\centering
\includegraphics[width=\textwidth]{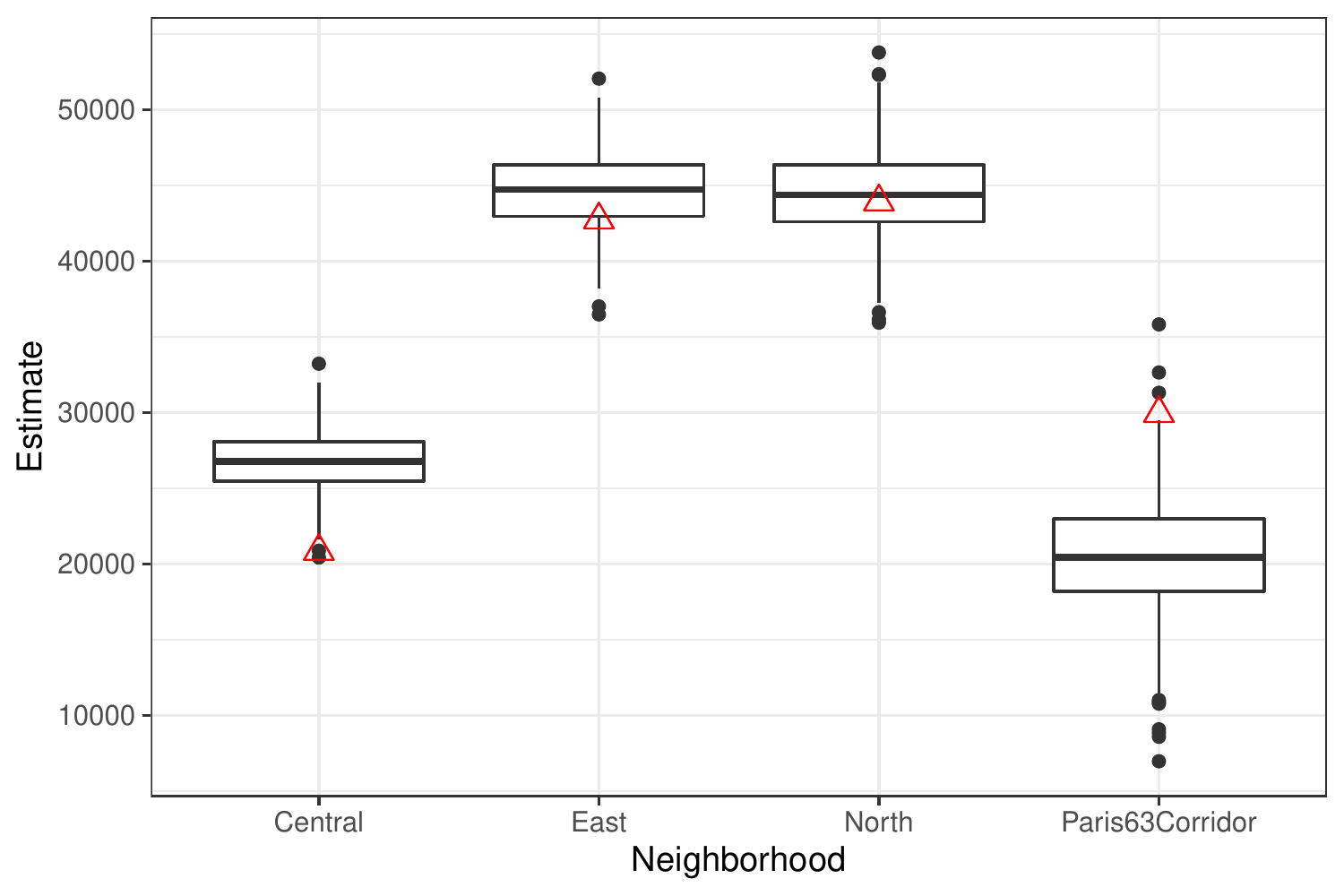}
\caption{}
\label{fig:mle_vs_gibbs_neighb}
\end{subfigure}
\caption{Comparison of MLEs and draws from the Gibbs sampler. The first 10 components of $\vec{\mu}_B$ are displayed in (\subref{fig:mle_vs_gibbs_mu_part1}), while (\subref{fig:mle_vs_gibbs_neighb}) displays estimates for the four neighborhoods which have been transformed to the original scale of the direct estimates. Boxplots correspond to Gibbs sampler draws and red triangles represent MLEs.}
\label{fig:mle_vs_gibbs}
\end{figure}

\clearpage

\appendix

\section{Computational details and proofs}
\label{sec:details}

We will make use of the following well-known property in several places.

\begin{property}
\label{result:vectorization}
If $\vec{A} \in \mathbb{R}^{m \times k}$, $\vec{B} \in \mathbb{R}^{k \times l}$, $\vec{C} \in \mathbb{R}^{l \times n}$, then $\Vec(\vec{A} \vec{B} \vec{C}) = (\vec{C}^\top \otimes \vec{A}) \Vec(\vec{B})$. 
\end{property}

The following proposition gives the explicit solution to the minimization problem stated in \eqref{eqn:approximant-problem}. \citet{BradleyHolanWikle2015AOAS} considers a similar problem featuring a more general objective function but assuming that the columns of $\vec{S}$ are orthonormal. \citet{Higham1988} gives a general discussion of problems involving Frobenius and 2-norm distance minimization.

\begin{proposition}[Frobenius Norm Minimization]
\label{result:frobmin}
Suppose $\vec{S} \in \mathbb{R}^{n \times r}$ has rank $r$ and $\vec{\Sigma} \in \mathbb{R}^{n \times n}$ is positive definite. The minimizer $\vec{X} \in \mathbb{R}^{r \times r}$ of $\lVert \vec{\Sigma} - \vec{S} \vec{X} \vec{S}^\top \rVert_\text{F}$ is $\vec{X} = (\vec{S}^\top \vec{S})^{-1} \vec{S}^\top \vec{\Sigma} \vec{S} (\vec{S}^\top \vec{S})^{-1}$.
\end{proposition}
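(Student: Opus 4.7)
The plan is to reduce the matrix least-squares problem to an ordinary (vector) least-squares problem via Property~\ref{result:vectorization}, then read off the solution via normal equations. Applying Property~\ref{result:vectorization} with the identification $\vec{A} = \vec{S}$, $\vec{B} = \vec{X}$, $\vec{C} = \vec{S}^\top$ yields $\Vec(\vec{S}\vec{X}\vec{S}^\top) = (\vec{S}\otimes\vec{S})\Vec(\vec{X})$. Since the Frobenius norm of a matrix equals the Euclidean norm of its vectorization, the objective becomes $\lVert \Vec(\vec{\Sigma}) - (\vec{S}\otimes\vec{S})\Vec(\vec{X})\rVert_2^2$, a standard linear least squares problem in the unknown vector $\Vec(\vec{X}) \in \mathbb{R}^{r^2}$.

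Next I would invoke the Kronecker identities $(\vec{A}\otimes\vec{B})^\top = \vec{A}^\top\otimes\vec{B}^\top$ and $(\vec{A}\otimes\vec{B})(\vec{C}\otimes\vec{D}) = (\vec{A}\vec{C})\otimes(\vec{B}\vec{D})$, together with the rank property $\mathrm{rank}(\vec{S}\otimes\vec{S}) = r^2$. The latter follows from the hypothesis that $\vec{S}$ has rank $r$ and guarantees that the Gram matrix $(\vec{S}\otimes\vec{S})^\top(\vec{S}\otimes\vec{S}) = (\vec{S}^\top\vec{S})\otimes(\vec{S}^\top\vec{S})$ is invertible. The normal equations then give the unique minimizer
\begin{align*}
\Vec(\vec{X}) &= \bigl[(\vec{S}^\top\vec{S})^{-1}\otimes(\vec{S}^\top\vec{S})^{-1}\bigr]\bigl(\vec{S}^\top\otimes\vec{S}^\top\bigr)\Vec(\vec{\Sigma}) \\
&= \bigl[(\vec{S}^\top\vec{S})^{-1}\vec{S}^\top \otimes (\vec{S}^\top\vec{S})^{-1}\vec{S}^\top\bigr]\Vec(\vec{\Sigma}).
\end{align*}
Applying Property~\ref{result:vectorization} in the reverse direction with $\vec{A} = (\vec{S}^\top\vec{S})^{-1}\vec{S}^\top$ and $\vec{C} = \vec{S}(\vec{S}^\top\vec{S})^{-1}$ (using symmetry of $\vec{S}^\top\vec{S}$) recovers the claimed closed form $\vec{X} = (\vec{S}^\top\vec{S})^{-1}\vec{S}^\top\vec{\Sigma}\vec{S}(\vec{S}^\top\vec{S})^{-1}$.

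The main obstacle is essentially bookkeeping: making sure the Kronecker manipulations line up correctly and that the rank hypothesis is used exactly where invertibility is required. Positive definiteness of $\vec{\Sigma}$ plays no role in the derivation itself but ensures that the minimized residual is meaningful. A secondary check is that the critical point is indeed a global minimizer; this is immediate because $f(\vec{X}) = \lVert\vec{\Sigma}-\vec{S}\vec{X}\vec{S}^\top\rVert_\text{F}^2$ is a convex quadratic in $\vec{X}$, with Hessian (in vectorized form) equal to the positive definite matrix $(\vec{S}^\top\vec{S})\otimes(\vec{S}^\top\vec{S})$. As an alternative route, one could differentiate the expanded trace expression directly to obtain the matrix equation $(\vec{S}^\top\vec{S})\vec{X}(\vec{S}^\top\vec{S}) = \vec{S}^\top\vec{\Sigma}\vec{S}$ and invert $\vec{S}^\top\vec{S}$ on each side; this sidesteps Kronecker algebra but is otherwise equivalent.
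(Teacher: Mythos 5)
Your proposal is correct and follows essentially the same route as the paper's proof: vectorize via Property~\ref{result:vectorization} to convert the Frobenius-norm objective into a standard least-squares problem in $\Vec(\vec{X})$, solve the normal equations using Kronecker identities, and de-vectorize to obtain the closed form. The extra remarks on invertibility of $(\vec{S}^\top\vec{S})\otimes(\vec{S}^\top\vec{S})$ and convexity are sound but not needed beyond what the paper already assumes.
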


\begin{proof}
Using Property~\ref{result:vectorization}, we have
\begin{align}
\lVert \vec{\Sigma} - \vec{S} \vec{X} \vec{S}^\top \rVert_{\text{F}}^2
&= \Vec\left[\vec{\Sigma} - \vec{S} \vec{X} \vec{S}^\top \right]^\top
\Vec\left[\vec{\Sigma} - \vec{S} \vec{X} \vec{S}^\top \right] \nonumber \\
&= \left[\Vec(\vec{\Sigma}) - \Vec(\vec{S} \vec{X} \vec{S}^\top) \right]^\top
\left[\Vec(\vec{\Sigma}) - \Vec(\vec{S} \vec{X} \vec{S}^\top) \right] \nonumber \\
&= \left[\Vec(\vec{\Sigma}) - (\vec{S} \otimes \vec{S}) \Vec(\vec{X}) \right]^\top
\left[\Vec(\vec{\Sigma}) - (\vec{S} \otimes \vec{S}) \Vec(\vec{X}) \right] \nonumber \\
&= \lVert \Vec(\vec{\Sigma}) - (\vec{S} \otimes \vec{S}) \Vec(\vec{X}) \rVert_2^2,
\label{eqn:euclidnorm}
\end{align}
where the norm on the last line is the usual 2-norm on $\mathbb{R}^{n^2}$. We recognize the expression in \eqref{eqn:euclidnorm} as a standard least squares minimization whose solution is
\begin{align*}
\Vec(\vec{X})
&= [(\vec{S} \otimes \vec{S})^\top (\vec{S} \otimes \vec{S})]^{-1} (\vec{S} \otimes \vec{S})^\top \Vec(\vec{\Sigma}) \\
&= [(\vec{S}^\top \otimes \vec{S}^\top) (\vec{S} \otimes \vec{S})]^{-1} (\vec{S}^\top \otimes \vec{S}^\top) \Vec(\vec{\Sigma}) \\
&= [ \vec{S}^\top \vec{S} \otimes \vec{S}^\top \vec{S} ]^{-1} \Vec(\vec{S}^\top \vec{\Sigma} \vec{S}) \\
&= [(\vec{S}^\top \vec{S})^{-1} \otimes (\vec{S}^\top \vec{S})^{-1}] \Vec(\vec{S}^\top \vec{\Sigma} \vec{S}) \\
&= \Vec\left[ (\vec{S}^\top \vec{S})^{-1} \vec{S}^\top \vec{\Sigma} \vec{S} (\vec{S}^\top \vec{S})^{-1} \right].
\end{align*}
Therefore, the minimizer is $\vec{X} = (\vec{S}^\top \vec{S})^{-1} \vec{S}^\top \vec{\Sigma} \vec{S} (\vec{S}^\top \vec{S})^{-1}$, as desired.
\end{proof}

\begin{remark}[MLE Computation]
\label{result:mle-computation}
To compute the MLE for the STCOS model, we first note that the likelihood, excluding the parameter model, is
\begin{align*}
f(\vec{Z} \mid \vec{\mu}_B, \sigma_K^2, \sigma_\xi^2)
&= \int \phi(\vec{Z} \mid \vec{H} \vec{\mu}_B + \vec{S} \vec{\eta}, \sigma_\xi^2 \vec{I} + \vec{V}) \cdot
\phi(\vec{\eta} \mid \vec{0}, \sigma_K^2 \vec{K}) d \vec{\eta} \\
&= \phi(\vec{Z} \mid \vec{H} \vec{\mu}_B, \vec{\Delta}) \\
&= (2 \pi)^{-N/2} |\vec{\Delta}|^{-1/2} \exp\left\{ -\frac{1}{2} (\vec{Z} -
\vec{H} \vec{\mu}_B)^\top \vec{\Delta}^{-1} (\vec{Z} - \vec{H} \vec{\mu}_B) \right\},
\end{align*}
where $\vec{\Delta} = \sigma_\xi^2 \vec{I} + \vec{V} + \sigma_K^2 \vec{S} \vec{K} \vec{S}^\top$. Given $\sigma_K^2$ and $\sigma_\xi^2$, the likelihood is maximized by the weighted least squares estimator $\hat{\vec{\mu}}_B = (\vec{H}^\top \vec{\Delta}^{-1} \vec{H})^{-1} \vec{H}^\top \vec{\Delta}^{-1} \vec{Z}$. To estimate the unknown $\sigma_K^2$ and $\sigma_\xi^2$, we carry out numerical maximization on the partially maximized log-likelihood
\begin{align*}
\ell(\sigma_K^2, \sigma_\xi^2) =
-\frac{N}{2} \log(2 \pi) -\frac{1}{2} \log |\vec{\Delta}| -\frac{1}{2} (\vec{Z} -
\vec{H} \hat{\vec{\mu}}_B)^\top \vec{\Delta}^{-1} (\vec{Z} - \vec{H} \hat{\vec{\mu}}_B).
\end{align*}
To enforce the constraints that $\sigma_K^2 > 0$ and $\sigma_\xi^2 > 0$, we optimize over $(\vartheta_1, \vartheta_2) \in \mathbb{R}^2$ and take $\sigma_K^2 = \exp(\vartheta_1)$, $\sigma_\xi^2 = \exp(\vartheta_2)$.
\end{remark}

\section{Supplementary: Scaling to Large Datasets}
\label{sec:scaling}

The demonstration in Section~\ref{sec:columbia} was carried out with a relatively small dataset based on $N = 421$ total observations in the source supports and $n_B = 85$ areas in the fine-level support. Here we will list some challenges that may be encountered when scaling to larger datasets.

\begin{myenumerate}
\item Work to compute the basis function \eqref{eqn:basis-spt} is proportional to the number of areas in the supports, the length of their lookback periods, and the number of Monte Carlo repetitions requested. Basis computations are independent across supports, or across areas within the same support, and therefore can be computed in parallel.

\item Time to compute the overlap matrix $\vec{H}$ also increases proportionally with the number of areas in supports. This is less substantial than basis function computation, but can also be parallelized across supports.

\item The Monte Carlo approximation for \eqref{eqn:basis-spt} utilizes a rejection sampling method to draw a sample from each area $A$, first drawing a point from the bounding box that surrounds $A$, then accepting the point only if it belongs to $A$ itself. This method has difficulty accepting samples when $|A|$ is very small relative to the bounding box; for example, if $A$ is a thin rectangle.

\item We used PCA to reduce the dimension of $\vec{S}$ based on a spectral decomposition of $\vec{S}^\top \vec{S}$. Before dimension reduction, the matrix $\vec{S}^\top \vec{S}$ is typically sparse, but the dimension can be quite large depending on the number of knot points used. Here it can be helpful to request a limited number of eigenvalue/eigenvector pairs; for example, this can be done using sparse matrices with the \code{RSpectra} package \citep{RSpectra2019}. After dimension reduction, neither $\vec{S}$ nor $\vec{S}^\top \vec{S}$ are typically sparse.

\item Although the CAR precision matrix $\vec{Q}$ is typically sparse, its inverse is dense when it exists. However, $\vec{Q}^{-1}$ is currently used only in the construction of $\vec{K}$.

\item The Gibbs sampler in Algorithm~\ref{alg:gibbs-sampler} involves repeated operations with $n_B \times n_B$ and $r \times r$ matrices: $r$ can be kept to a manageable size using the suggested dimension reduction, but $n_B$ depends on the choice of fine-level support.

\item For both the Gibbs and Stan sampling, storing all of $\vec{\mu}_B \in \mathbb{R}^{n_B}$, $\vec{\xi} \in \mathbb{R}^{N}$, and $\vec{\eta} \in \mathbb{R}^r$ for every iteration of the sampler can become a memory/storage burden. The vector $\vec{\xi}$ has not been used in post-processing and need not be stored. However, $\vec{\mu}_B$ and $\vec{\eta}$ are both utilized in post-processing. For very large $n_B$, it may be more efficient to compute desired functions of the draws within the sampler rather than saving the draws for later use.

\item To support a sparse representation for a large $\vec{H}$ matrix in Stan, we can make use of Stan's \code{csr_matrix_times_vector} function to compute $\vec{H} \vec{\mu}_B$.

\item The MLE computation in Remark~\ref{result:mle-computation} utilizes the $N \times N$ covariance matrix 
\(
\vec{\Delta} = \sigma_K^2 \vec{S} \vec{K} \vec{S}^\top + \sigma_{\xi}^2 \vec{I} + \vec{V}
\)
of the marginal distribution of $\vec{Z}$. This matrix is also utilized in a Stan sampler with the quantities $\vec{\xi}$ and $\vec{\eta}$ integrated out. We may not be able to explicitly construct $\vec{\Delta}$ if $N$ is very large. As previously noted, the matrix $\vec{S}$ will be dense if its dimension has been reduced via PCA, so that $\vec{\Delta}$ in turn will also be dense. Some additional matrix algebra may assist in computing the likelihood. For example, the Sherman-Morrison-Woodbury identity yields
\begin{align}
\vec{\Delta}^{-1} = \vec{U}^{-1} - \vec{U}^{-1}
\vec{S}
\left[ \sigma_K^{-2} \vec{K}^{-1} + \vec{S}^\top \vec{U}^{-1} \vec{S} \right]^{-1}
\vec{S}^\top
\vec{U}^{-1}
\label{eqn:smw}
\end{align}
where $\vec{U} = \sigma_{\xi}^2 \vec{I} + \vec{V}$. Using \eqref{eqn:smw}, the quadratic form $\vec{Z}^\top \vec{\Delta}^{-1} \vec{Z}$ can be computed without forming (or inverting) any $N \times N$ matrices.
\end{myenumerate}

To illustrate the methods on a larger dataset, code for a larger scale data analysis is also provided in the supplemental materials for this article. Here the analysis is structured similarly to \citet{JSM2017-STCOS}. There are 17 source supports with counties in the continental United States and median household income as the ACS variable of interest. We use 1-year estimates for years 2011--2017, 3-year estimates for years 2011--2013, and 5 year estimates for years 2011--2017. The 2017 county geography was taken to be the fine-level support, and the target support was taken to be 2017 congressional districts. This yields $N = 32943$ observations in the source supports and $n_B = 3105$ fine-level support areas. For the dimension reduction of $\vec{S}$, we took $r = 56$. The code was run on an Intel Core i7-2600 3.40GHz with 4 cores and 8 GB of memory using \pkg{stcos} version 0.3.0. The following highlights were noted.

\begin{myenumerate}
\item Overall run time was about 5.5 hours. This consisted of: 1 minute to download and assemble data using Census API and \pkg{tigris} package as in Section~\ref{sec:assemble}, 167 minutes to prepare the analysis, similarly to Section~\ref{sec:prepare}, 75 minutes to run the Gibbs sampler and produce plots of estimates for congressional districts, and 88 minutes to run the Stan sampler and produce equivalent plots.
\item Overlap matrix $\vec{H}$ took 9 minutes to construct.
\item Design matrix $\vec{S}$ took 83 minutes to construct from basis functions. Each 1-year, 3-year, and 5-year source support took about 1 minute, 4 minutes, and 9 minutes, respectively.
\item Design matrix $\vec{S}^*$  took about 70 minutes to construct.
\item Constructing $\vec{K}$ via \eqref{eqn:random-walk} and \eqref{eqn:spatial-only} took 2 minutes and 7 seconds, respectively.
\item Gibbs sampler with 10000 iterations, burn-in of 2000, and thinning to save 1 of every remaining 10 draws took 71 minutes.
\item Stan sampling with 2000 iterations, burn-in of 1200, and no thinning took 88 minutes.
\end{myenumerate}
Note that these results are intended to give a rough idea of run times, and may vary depending on hardware, installed libraries, compilers, and many other factors. Improvements may be possible in future versions of \pkg{stcos}, or in the analysis code itself, to improve scalability.

\end{document}